\acrodef{AEP}{Asymptotic Equipartition Property}
\acrodef{AoA}{Angle of Arrival}
\acrodef{AWGN}{Additive White Gaussian Noise}
\acrodef{BER}{Bit-Error-Rate}
\acrodef{BEC}{Binary Erasure Channel}
\acrodef{BPSK}{Binary Phase-Shift Keying}
\acrodef{BSC}{Binary Symmetric Channel}
\acrodef{CDF}[CDF]{Cumulative Distribution Function}
\acrodef{CLT}[CLT]{Central Limit Theorem}
\acrodef{CSI}[CSI]{Channel State Information}
\acrodef{DMC}[DMC]{Discrete Memoryless Channel}
\acrodef{DMS}[DMS]{Discrete Memoryless Source}
\acrodef{iid}[i.i.d.]{independent and identically distributed}
\acrodef{LPD}[LPD]{Low Probability of Detection}
\acrodef{LDPC}[LDPC]{Low-Density Parity-Check}
\acrodef{MAC}[MAC]{multiple-access channel}
\acrodef{MIMO}[MIMO]{Multiple-Input Multiple-Output} 
\acrodef{MISO}{Multiple-Input Single-Output}
\acrodef{PDF}[PDF]{Probability Distribution Function}
\acrodef{PMF}[PMF]{Probability Mass Function}
\acrodef{PPM}[PPM]{Pulse Position Modulation}
\acrodef{PSD}{Power Spectral Density}
\acrodef{QPSK}{Quadrature Phase-Shift Keying}
\acrodef{SIMO}{Single-Input Multiple-Output}
\acrodef{SNR}{Signal-to-Noise Ratio}
\acrodef{wrt}[w.r.t.]{with respect to}
\acrodef{WSS}{Wide Sense Stationary} 
\DeclareMathAlphabet{\eurm}{U}{eur}{m}{n}
\DeclareMathAlphabet{\mathbsf}{OT1}{cmss}{bx}{n}
\DeclareMathAlphabet{\mathssf}{OT1}{cmss}{m}{sl}
\DeclareMathAlphabet{\mathcsf}{OT1}{cmss}{sbc}{n}
\DeclareSymbolFont{bsfletters}{OT1}{cmss}{bx}{n}  
\DeclareSymbolFont{ssfletters}{OT1}{cmss}{m}{n}
\DeclareMathSymbol{\bsfGamma}{0}{bsfletters}{'000}
\DeclareMathSymbol{\ssfGamma}{0}{ssfletters}{'000}
\DeclareMathSymbol{\bsfDelta}{0}{bsfletters}{'001}
\DeclareMathSymbol{\ssfDelta}{0}{ssfletters}{'001}
\DeclareMathSymbol{\bsfTheta}{0}{bsfletters}{'002}
\DeclareMathSymbol{\ssfTheta}{0}{ssfletters}{'002}
\DeclareMathSymbol{\bsfLambda}{0}{bsfletters}{'003}
\DeclareMathSymbol{\ssfLambda}{0}{ssfletters}{'003}
\DeclareMathSymbol{\bsfXi}{0}{bsfletters}{'004}
\DeclareMathSymbol{\ssfXi}{0}{ssfletters}{'004}
\DeclareMathSymbol{\bsfPi}{0}{bsfletters}{'005}
\DeclareMathSymbol{\ssfPi}{0}{ssfletters}{'005}
\DeclareMathSymbol{\bsfSigma}{0}{bsfletters}{'006}
\DeclareMathSymbol{\ssfSigma}{0}{ssfletters}{'006}
\DeclareMathSymbol{\bsfUpsilon}{0}{bsfletters}{'007}
\DeclareMathSymbol{\ssfUpsilon}{0}{ssfletters}{'007}
\DeclareMathSymbol{\bsfPhi}{0}{bsfletters}{'010}
\DeclareMathSymbol{\ssfPhi}{0}{ssfletters}{'010}
\DeclareMathSymbol{\bsfPsi}{0}{bsfletters}{'011}
\DeclareMathSymbol{\ssfPsi}{0}{ssfletters}{'011}
\DeclareMathSymbol{\bsfOmega}{0}{bsfletters}{'012}
\DeclareMathSymbol{\ssfOmega}{0}{ssfletters}{'012}
\newcommand{\calA}{{\mathcal{A}}}
\newcommand{\calB}{{\mathcal{B}}}
\newcommand{\calC}{{\mathcal{C}}}
\newcommand{\calD}{{\mathcal{D}}}
\newcommand{\calF}{{\mathcal{F}}}
\newcommand{\calX}{{\mathcal{X}}}
\newcommand{\calY}{{\mathcal{Y}}}
\newcommand{\calZ}{{\mathcal{Z}}}
\newcommand{\E}[2][]{{\mathbb{E}_{#1}}{\left(#2\right)}}       
\renewcommand{\P}[2][]{{\mathbb{P}_{#1}}{\left(#2\right)}}
\newcommand{\avgD}[2]{{{\mathbb{D}}\!\left({#1\Vert#2}\right)}}
\newcommand{\V}[1]{{{\mathbb{V}}\!\left(#1\right)}}
\newcommand{\avgI}[1]{{{\mathbb{I}}\!\left(#1\right)}}
\newcommand{\avgH}[1]{{\mathbb{H}}\!\left(#1\right)}
\newcommand{\Hb}[1]{{\mathbb{H}_b}\left(#1\right)}
\newcommand{\card}[1]{\ensuremath{\left|{#1}\right|}}           
\newcommand{\abs}[1]{\ensuremath{\left|#1\right|}}              
\newcommand{\eqdef}{\ensuremath{\triangleq}}                    
\newcommand{\intseq}[2]{\ensuremath{\llbracket{#1},{#2}\rrbracket}}  
\newcommand{\indic}[1]{\ensuremath{\mathds{1}\!\left\{#1\right\}}}
\renewcommand{\leq}{\leqslant}
\renewcommand{\geq}{\geqslant}
\newcommand{\proddist}{%
  \mathchoice{\raisebox{1pt}{$\displaystyle\otimes$}}
             {\raisebox{1pt}{$\otimes$}}
             {\raisebox{0.5pt}{\scalebox{0.7}{$\scriptstyle\otimes$}}}
             {\raisebox{0.4pt}{\scalebox{0.6}{$\scriptscriptstyle\otimes$}}}}
\newcommand{\pn}{{\proddist n}}
\newtheorem{theorem}{Theorem}
\newtheorem{remark}{Remark}
\newtheorem{lemma}{Lemma}
\newcommand{\what}[1]{\widehat{W}_{#1}} 
\newcommand{\wtilde}[1]{\widetilde{W}_{#1}}
\newcommand{\xn}{\mathbf{x}}
\newcommand{\xbarn}{\overline{\mathbf{x}}}
\newcommand{\xbar}{\overline{x}}
\newcommand{\bxn}{\mathbf{X}}
\newcommand{\yn}{\mathbf{y}}
\newcommand{\byn}{\mathbf{Y}}
\newcommand{\zn}{\mathbf{z}}
\newcommand{\bzn}{\mathbf{Z}}
\newcommand{\qbarn}[1]{\smash{\overline{Q}}_{#1}^n}
\newcommand{\qhatn}[1]{\widehat{Q}_{#1}^n}
\newcommand{\qhatj}[1]{\widehat{Q}_{j,#1}}
\newcommand{\phatj}[1]{\widehat{P}_{j,#1}}
\newcommand{\qzn}{Q_0^{\pn}}
\newcommand{\pibar}[3]{\Pi_{#1,#2,#3}}
\newcommand{\pbarcovn}[3]{\smash{\overline{P}}^{\pn}_{#1,#2,#3}}
\newcommand{\qbarcovn}[3]{\smash{\overline{Q}}^{\pn}_{#1,#2,#3}}
\newcommand{\qbarcov}[3]{\overline{Q}_{#1,#2,#3}}
\newcommand{\pbarcov}[3]{\overline{P}_{#1,#2,#3}}
\newcommand{\qo}{Q_1}
\newcommand{\qz}{Q_0}
\newcommand{\po}{P_1}
\newcommand{\pz}{P_0}
\newcommand{\pzc}[1]{P_{j,#1}^0}
\newcommand{\poc}[1]{P_{j,#1}^1}
\newcommand{\qzc}[1]{Q_{j,#1}^0}
\newcommand{\qoc}[1]{Q_{j,#1}^1}
\newcommand{\wyxn}{W_{Y|X}^{\pn}}
\newcommand{\wzxn}{W_{Z|X}^{\pn}}
\newcommand{\wyzxn}{W_{YZ|X}^{\pn}}
\newcommand{\wyx}{W_{Y|X}}
\newcommand{\wzx}{W_{Z|X}}
\newcommand{\wyzx}{W_{YZ|X}}
\newcommand{\vxx}{V_{X|\overline{X}}}
\newcommand{\pe}[1]{P_e^{(#1)}}
\newcommand{\chisquare}[2]{\chi_2\left(#1 \| #2 \right)}
\newcommand{\mun}[1]{\mu_{j,#1}^{(n)}}
\newcommand{\Psij}[1]{\Psi_{j,#1}^{(n)}}
\newcommand{\xij}[1]{\xi_{j,#1}^{(n)}}
\newcommand{\nz}{n_{j,0}}
\newcommand{\no}{n_{j,1}}
\newcommand{\rhoz}{\rho_{j,0}^{(n)}}
\newcommand{\rhoo}{\rho_{j,1}^{(n)}}
\newcommand{\mumin}{\mu_{\min}^{(n)}}
\newcommand{\agammaj}{\calA_{\gamma_j}^n}
\newcommand{\btauj}{\calB_{\tau_j}^n}
\newcommand{\lamj}{\lambda_j^{(n)}}
\renewcommand{\E}[2]{\mathbb{E}_{#1}\bracknorm{#2}}
\newcommand{\brackcurl}[1]{\left\{#1\right\}}
\newcommand{\bracknorm}[1]{\left(#1\right)}
\newcommand{\bracksq}[1]{\left[#1\right]}
\newcommand{\bigO}[1]{\mathcal{O} \left( #1 \right)}
\newcommand{\nex}{\nonumber \\}
\newcommand{\limn}{\lim_{n \to \infty}}
\newcommand{\dupspace}{\phantom{==}}
\newcommand{\etal}{\textit{et al}.}
\newcommand{\RNum}[1]{\uppercase\expandafter{\romannumeral #1\relax}}
\newtheorem{example}{Example}
\begin{document}
	
\title{Embedding Covert Information in Broadcast Communications}
\author{Keerthi Suria Kumar Arumugam, \textit{Student Member, IEEE}, and Matthieu R. Bloch, \textit{Senior Member, IEEE}\thanks{Parts of this manuscript were presented at the 2017 IEEE Information Theory Workshop~\cite{ArumugamBloch2017}. This work was supported by the National Science Foundation under Award 1527387.}\thanks{This work has been submitted to the IEEE for possible publication. Copyright may be transferred without notice, after which this version may no longer be accessible.}}
\maketitle

\begin{abstract}
	We analyze a two-receiver binary-input discrete memoryless broadcast channel, in which the transmitter communicates a common message simultaneously to both receivers and a covert message to only one of them. 
	The unintended recipient of the covert message is treated as an adversary who attempts to detect the covert transmission. 
	This model captures the problem of embedding covert messages in an innocent codebook and generalizes previous covert communication models in which the innocent behavior corresponds to the absence of communication between legitimate users. 
	We identify the exact asymptotic behavior of the number of covert bits that can be transmitted when the rate of the innocent codebook is close to the capacity of the channel to the adversary. 
	Our results also identify the dependence of the number of covert bits on the channel parameters and the characteristics of the innocent codebook.
\end{abstract}
\section{Introduction} \label{sec:introduction}
In certain scenarios, the very intention to communicate can be considered as a violation resulting in dire consequences. 
Consequently, many techniques such as spread-spectrum communications have been developed to ensure communication with \ac{LPD} also known as covert communication. 
There has been a renewed interest to study the information theoretic limits of \ac{LPD}, especially after Bash \etal~\cite{BashGoeckelTowsley2013} showed that covert communication over a point-to-point channel is subject to the \emph{square-root} law.
They showed that the transmitter can only send $\bigO{\sqrt{n}}$ bits over $n$ channel uses without being detected by the adversary. 
Several subsequent works have led to a complete characterization of the information-theoretic limits of covert communication over point-to-point classical channels. 
While the results of~\cite{WangWornellZheng2016, Bloch2016} established a tight first-order asymptotic characterization of the covert throughput over point-to-point channels,~\cite{TahmasbiBloch2017} refined the results with second-order asymptotics for different covertness metrics.
Results of~\cite{Bloch2016, CheBakshiJaggi2013} highlighted the conditions required to achieve keyless \ac{LPD} communication over \acp{DMC} and \acp{BSC}, respectively. 
Furthermore, other works have analyzed covert communication over multiple-access channels~\cite{ArumugamBloch2018}, broadcast channels~\cite{ArumugamBloch2017, Tan2017}, relay channels~\cite{Arumugam2018covertrelay, hu2018covert}, and timing channels~\cite{Soltani2015, Mukherjee2016a}. 
A few other works have analyzed scenarios in which the square-root law does not apply; for instance, scenarios in which the adversary is uncertain about the channel parameters~\cite{CheBakshiChanEtAl2014a, LeeBaxleyWeitnauerEtAl2015} or timing of the transmission~\cite{BashGoeckelTowsley2014, ArumugamBloch2016a}.
In addition, there have also been efforts to construct explicit codes for covert communication~\cite{ZhangBakshiJaggi2016, BlochGuha2017, FrecheBlochBarret2017, KadampotTahmasbiBloch2018}.

All the above works define covert communication \ac{wrt} an innocent behavior in which the transmitter does not communicate. 
In contrast, we analyze a scenario in which the innocent behavior  corresponds to the transmission of codewords from an innocent codebook that is permitted and decoded by the adversary.
The work of Dutta \etal~\cite{DuttaSahaGrunwaldEtAl2012} on covert communication using dirty constellations is one of the motivations for the present work. 
In~\cite{DuttaSahaGrunwaldEtAl2012}, the authors rely on channel noise and equipment imperfections to hide a covert signal by superimposing it on top of an innocent signal while incurring minimal distortion; consequently, the informed receiver can decode the covert message while the uninformed adversary attributes the distortion of the signal to channel impairments and hardware imperfections. 
Although the authors show that their message-hiding scheme is immune to certain statistical tests, their scheme is not fundamentally covert against a more powerful adversary.
Our objective is to develop an information-theoretic analysis of embedding covert signals in innocent communication signals while escaping detection from an adversary who is not restricted to using a small set of statistical tests.

Our model relates to several previous works. 
It can be viewed as an instance of steganography~\cite{Cachin1998}, in which part of the covertext is controlled through the design of a coding scheme and another part stems from the channel noise that is only statistically known.
The model that is closest to the one considered in this work is that of~\cite{CheBakshiChanEtAl2014}, which analyzes a broadcast setup for \acp{BSC} and exploits the additive nature of the noise in \acp{BSC}.
Tan and Lee~\cite{Tan2017} also analyzed covert communication over broadcast channels, but their channel model differs from the one considered in this work. 
In~\cite{Tan2017}, the authors study the covert capacity region for a broadcast channel model in which the transmitter simultaneously sends two different covert messages to two legitimate users while escaping detection from a third user and showed that time-division transmission is optimal. 
In contrast, our model considers two receivers, one of which tries to detect the presence of a covert message besides decoding the common message. 

We build upon the channel resolvability techniques developed in~\cite{Bloch2016, ArumugamBloch2018} for point-to-point channels and \acp{MAC}, respectively, to embed covert information into innocent transmissions.
In particular, we show that the transmitter can perturb no more than $\bigO{\sqrt{n}}$ symbols of the $n$-length sequences representing the innocent transmission to remain covert from the adversary. 
We precisely characterize the asymptotic behavior of the number of covert bits that can be transmitted when the rate of the innocent transmission approaches the capacity of the channel to the adversary. 
This characterization highlights the dependence of the number of covert bits on the channel parameters and the characteristics of the innocent codebook. 
We provide an achievability proof, a detailed converse proof, and specialize our results to a \ac{BSC}, all of which were omitted in~\cite{ArumugamBloch2017}. 

The remainder of the paper is organized as follows. 
In Section~\ref{sec:notation}, we set the notation used in the paper, and in Section~\ref{sec:channelmodel}, we formally introduce our channel model. 
In Section~\ref{sec:preliminaries}, we develop a preliminary result that captures the essence of our approach to embedding covert information in innocent transmissions. 
Finally, we present our main result in Section~\ref{sec:mainresult}, which consists in an achievability and a converse characterizing the optimal asymptotic number of reliable and covert bits. 

\section{Notation} \label{sec:notation}
We denote random variables and their realizations in upper and lower case, respectively.
All sequences in boldface are $n$-length sequences, where $n \in \mathbb{N}^*$, unless specified otherwise. 
A sequence of random variables $\bracknorm{Y_j, Y_{j+1}, \ldots, Y_k}$ is denoted by $\byn_j^k$. 
The element at position $\ell \in \intseq{1}{n}$ of a sequence $\xn_{j}$ is denoted by $x_{j,\ell}$.
We interpret $\log$ and $\exp$ to the base $e$; the results can be interpreted in bits by converting $\log$ to the base $2$. 
Adhering to standard information-theoretic notation, $\avgH{X}$ and $\avgI{X;Y}$ represent the average entropy of $X$ and the average mutual information between $X$ and $Y$, respectively.
If the distribution of $X$ is $P$ and the channel between $X$ and $Y$ is $W_{Y|X}$, then $\mathbb{I}\bracknorm{P, W_{Y|X}}$ also represents the average mutual information between $X$ and $Y$. 
For $x \in \bracksq{0,1} $, $\Hb{x}$ denotes the average binary entropy of $x$. 
For two distributions $P$ and $Q$ on the same finite alphabet $\calX$, the Kullback-Liebler (KL) divergence is $\avgD{P}{Q} \eqdef \smash{ \sum_x P(x) \log \frac{P(x)}{Q(x)} }$, the variational distance is $\V{P,Q} \eqdef \frac{1}{2} \sum_x \abs{P(x) - Q(x)}$, and the chi-squared distance is $\chisquare{P}{Q} \eqdef \smash{\sum_x \frac{\bracknorm{P(x) - Q(x)}^2}{Q(x)}}$. 
Pinsker's inequality states that $\V{P,Q}^2 \leq \frac{1}{2} \avgD{P}{Q}$. 
If $P$ is absolutely continuous \ac{wrt} $Q$, we write $P \ll Q$.

\section{Channel model} \label{sec:channelmodel}
\begin{figure}[b] 
  \centering
  \includegraphics[width=0.7\linewidth]{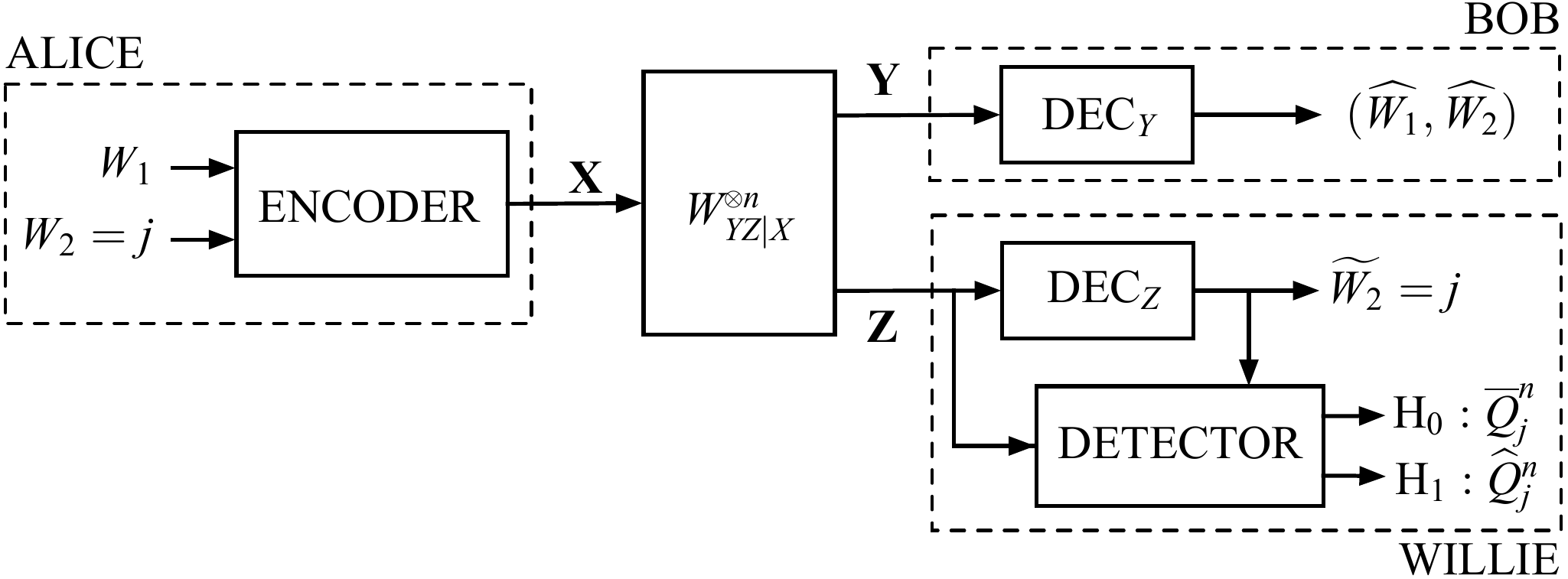} 
  \caption{Model of covert communication over a discrete memoryless broadcast channel for a fixed common message $W_2 = j$.}
  \label{fig:channelmodel}
\end{figure}
We analyze a channel model in which Alice, the transmitter, communicates a common message to both Bob, the receiver, and Willie, the warden, and a covert message to Bob alone over a discrete memoryless broadcast channel $\bracknorm{\calX, W_{YZ|X}, \calY, \calZ}$. 
We assume that the transmitter uses a binary input alphabet $\calX \eqdef \brackcurl{0,1}$ and that the output alphabets $\calY$ and $\calZ$ are finite. 
Furthermore, we assume that all terminals are synchronized and possess complete knowledge of the coding scheme used.

As illustrated in Figure~\ref{fig:channelmodel}, Alice wishes to communicate a uniformly distributed common message $W_2 \in \intseq{1}{M_2}$ to both Bob and Willie, and a uniformly distributed covert message $W_1 \in \intseq{1}{M_1}$ to Bob alone. 
Alice may also choose not to transmit any covert message, in which case, she sets $W_1 = 0$. 
She then encodes the message pair $\bracknorm{W_1, W_2} = (i,j)$ into an $n$-length codeword $\bxn_{ij} $.  
We label the collection of codewords $\brackcurl{\bxn_{0j}}_{j=1}^{M_2}$ as the \emph{innocent codebook}. 
Alice sends the codeword over the discrete memoryless broadcast channel in $n$ channel uses, at the end of which, Bob and Willie observe the $n$-length sequences $\byn$ and $\bzn$, respectively. 
Since the channel is memoryless, we denote the transition probability corresponding to $n$ uses of the channel by $ \wyzxn \eqdef \prod_{i=1}^n \wyzx $. 
For $a \in \calX$, we denote the output distributions induced by each input symbol at Bob and Willie by $P_a(y) \eqdef \wyx(y|a)$ and $Q_a(z) \eqdef \wzx(z|a)$, respectively. 
For $a,b \in \calX$ with $a \neq b$, we assume $P_a \ll P_b$, $Q_a \ll Q_b$, $Q_a \neq Q_b$. 
Without the first assumption, Bob has an unfair advantage over Willie~\cite{Bloch2016}. 
Without the second and third assumptions, achieving covert communication becomes either impossible or trivial~\cite{WangWornellZheng2016, Bloch2016}. 
We also make the following assumptions. 
\begin{itemize}
	\item The channel $\bracknorm{\calX, \wzx, \calZ}$ to Willie admits a unique capacity-achieving input distribution $\Lambda$, for which $\Lambda(1) \eqdef \lambda^*$, where $\lambda^* > 0$. Many channels encountered in practice satisfy this assumption.\footnote{Note that this assumption is required to prove only the converse.}
	\item $\mathbb{I}\bracknorm{\Lambda, \wyx} \geq \mathbb{I}\bracknorm{\Lambda, \wzx}$, so that Willie limits the rate of the common message. 
\end{itemize}
 
Upon observing the noisy sequence $\bzn$, Willie forms an estimate $\wtilde{2}$ of $W_2$. 
We measure reliability at Willie using the following metric,
\begin{align}
	\pe{2} \eqdef \frac{1}{M_2} \sum_{j=1}^{M_2} P_{e,j}^{(2)} \eqdef \mathbb{E}_{W_2}\bracknorm{P_{e,W_2}^{(2)}} , \label{eq:1}
\end{align}
where
\begin{align}
	P_{e,j}^{(2)} \eqdef  \P{\wtilde{2} \neq j | W_1 = 0, W_2 = j}
	+ \P{\wtilde{2} \neq j | W_1 \neq 0, W_2 = j}. 
\end{align}
Strictly speaking, $\smash{\pe{2}}$ is not a probability measure; however, a small $\pe{2}$ ensures that the average decoding error probability of the common message at Willie is small regardless of the presence of $W_1$. 
Note that there is no prior on whether $W_1 = 0$ or $W_1 \neq 0$. 
Willie attempts to detect the presence of a non-zero covert message by performing a binary-hypothesis test on his observation $\bzn$ to distinguish between the hypotheses $H_0 \eqdef \brackcurl{W_1 = 0}$ and $H_1 \eqdef \brackcurl{W_1 \neq 0}$.
We denote Willie's Type I and Type II errors by $\alpha$ and $\beta$, respectively. 
For a fixed $W_2 = j$, the output distribution observed by Willie is 
\begin{align}
	\qbarn{j}(\zn) & \eqdef \wzxn\bracknorm{\zn|\xn_{0j}}, \quad \text{if } W_1 = 0, \label{eq:2} \\
	\qhatn{j}(\zn) & \eqdef \frac{1}{M_1} \sum_{i=1}^{M_1} \wzxn\bracknorm{\zn|\xn_{ij}}, \quad \text{else}. \label{eq:3}
\end{align}
For a fixed common message $W_2 = j$, we measure the covertness of $W_1$ by the KL divergence $\smash{\avgD{\qhatn{j}}{\qbarn{j}}}$ since any statistical test~\cite{LehmannRomano2014} conducted on $\bzn$ by Willie must satisfy $\alpha + \beta \geq 1 - \sqrt{\avgD{\qhatn{j}}{\qbarn{j}}}$.  
A vanishing KL divergence ensures that $\alpha + \beta = 1$ in the limit, 
so that Willie's statistical test is no better than a random guess making the test futile in detecting the presence of a covert message. 

Upon observing $\byn$, Bob forms an estimate $\smash{\bracknorm{\smash{\what{1}, \what{2}}}}$ of the transmitted message pair $\bracknorm{W_1, W_2}$. We measure reliability at Bob using the metric
\begin{align}
	P_e^{(1)} & \eqdef  \frac{1}{M_2} \sum_{j=1}^{M_2} \bracknorm{ P_{e,1,j}^{(1)} + P_{e,2,j}^{(1)}} ,  \\
	& \eqdef \mathbb{E}_{W_2}\bracknorm{ P_{e,1,W_2}^{(1)}} +  \mathbb{E}_{W_2}\bracknorm{P_{e,2,W_2}^{(1)}} \label{eq:4}, 
\end{align}
where
\begin{align}
	  P_{e,1,j}^{(1)} & \eqdef \P{\widehat{W}_2 \neq j|W_1=0, W_2 = j} + \P{\widehat{W}_2\neq j|W_1\neq 0, W_2 = j}, \label{eq:4a} \\
	  P_{e,2,j}^{(1)} & \eqdef \P{\widehat{W}_1 \neq 0 | W_1 = 0, \widehat{W}_2=W_2=j}  + \P{\widehat{W_1}\neq W_1|W_1 \neq 0, \widehat{W}_2=W_2=j}. \label{eq:4b}
\end{align}
Despite $\pe{1}$ not being an error probability in the strict sense, a small $\pe{1}$ guarantees that the average error probability of the covert message and the common message at Bob is small. 

Our main objective is to characterize the optimal scaling of $\log M_1$ and $\log M_2$ with $n$ such that 
\begin{align}
	& \limn \pe{1} = \limn \pe{2} = 0, \label{eq:5} \\
	\forall j \in \intseq{1}{M_2}, \quad & \limn \avgD{\qhatn{j}}{\qbarn{j}} = 0. \label{eq:6}
\end{align}
Note that we choose to satisfy the more stringent requirement that $\limn \smash{\avgD{\qhatn{j}}{\qbarn{j}}}$ vanishes for every $j \in \intseq{1}{M_2}$ so that the hypothesis test used by Willie is futile in detecting the presence of any covert message for \emph{every choice} of the common message and not just on average. 

A couple of comments are now in order. 
First, note that our goal is twofold here: we wish to design a reliable code to communicate a common message and a reliable code to embed a covert message; this is a \emph{joint} code-design problem, and we do not address the problem of embedding covert bits into a fixed code for the common message.
Second, the problem generalizes previous works on covert communication, in which covertness was measured \ac{wrt} the innocent distribution $\qzn$ corresponding to the transmission of the all-zero sequence. 
In our case, for $W_2 = j \in \intseq{1}{M_2}$, covertness is measured \ac{wrt} the distribution $\qbarn{j}$, which is a product distribution that is not identically distributed and corresponds to the communication of the innocent codeword mapped to the common message $W_2 = j$. 

\section{Preliminaries} \label{sec:preliminaries}
Following the approach put forward in~\cite{Bloch2016}, we define a \emph{covert stochastic process}, which serves as the target distribution that our covert code approximates. 
By introducing the covert process, we precisely quantify the fraction of symbols in the innocent codeword that Alice can perturb to transmit covert information while simultaneously avoiding detection by Willie. 
\begin{figure}
  \centering
  \includegraphics[width=0.7\linewidth]{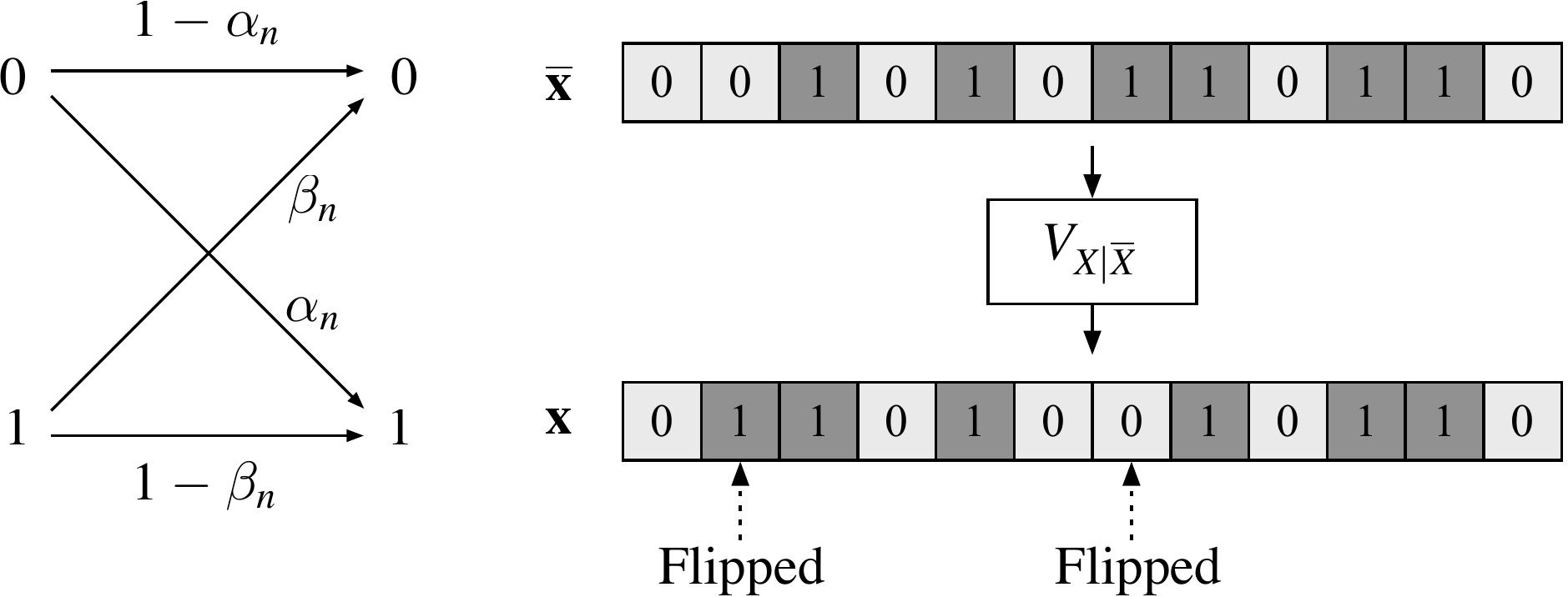}
  \caption{Binary asymmetric channel $\vxx$ and an illustration of  innocent symbols flipped by the channel $\vxx$.}
  \label{fig:covproc}
\end{figure}
For a fixed $n \in \mathbb{N}^*$ and a sequence $\xbarn \in \calX^n$, we define the covert process as the output of the binary asymmetric channel $\vxx$ illustrated in Figure~\ref{fig:covproc} such that $\vxx(1|0) \eqdef \alpha_n$ and $\vxx(0|1) \eqdef \beta_n$, where $\alpha_n, \beta_n \in (0,1)$ are cross-over probabilities. 
We denote the distribution of the covert process by $\pibar{\xbarn}{\alpha_n}{\beta_n}$ defined as
\begin{align}
	\pibar{\xbarn}{\alpha_n}{\beta_n}(\xn) \eqdef \prod_{i=1}^n \vxx \bracknorm{x_i|\xbar_i}. \label{eq:8}
\end{align}
We set $\gamma_n \eqdef \frac{\beta_n}{\alpha_n}$, and when defining a sequence $\brackcurl{\gamma_n}_{n \in \mathbb{N}^*}$, we ask that it converges to $\gamma \in \mathbb{R}^+$. 
The transmission of the covert process through the \acp{DMC} $\bracknorm{\calX, \wyx, \calY}$ and $\bracknorm{\calX, \wzx, \calZ}$ induces the output distributions
\begin{align}
	\pbarcovn{\xbarn}{\alpha_n}{\beta_n}(\yn) \eqdef \sum_{\xn} \wyxn \bracknorm{\yn|\xn} \pibar{\xbarn}{\alpha_n}{\beta_n}(\xn), \label{eq:9} \\
	\qbarcovn{\xbarn}{\alpha_n}{\beta_n}(\zn) \eqdef \sum_{\xn} \wzxn \bracknorm{\zn|\xn} \pibar{\xbarn}{\alpha_n}{\beta_n}(\xn), \label{eq:10}
\end{align}
at Bob and Willie, respectively.  
Note that both $\pbarcovn{\xbarn}{\alpha_n}{\beta_n}$ and $\qbarcovn{\xbarn}{\alpha_n}{\beta_n}$ are product distributions, and setting both $\alpha_n$ and $\beta_n$ to $0$ results in a distribution   $\qbarcovn{\xbarn}{0}{0}$ at Willie. 
We now have the following generalization of~\cite[Lemma 1]{Bloch2016}. 
\begin{lemma} \label{lem:covertprocess}
	Let $\alpha_n, \beta_n \in (0,1)$ be such that $\limn \alpha_n = \limn \beta_n = 0$. 
	Let ${ \normalfont \xbarn} \in \calX^n$ and set $\lambda_n \eqdef \sum_{i=1}^n \frac{\indic{ \xbar_i = 1}}{n}$. 
	Then, for a large $n \in \mathbb{N}^*$, we bound ${\normalfont \avgD{\qbarcovn{\xbarn}{\alpha_n}{\beta_n}}{\qbarcovn{\xbarn}{0}{0}}}$ by
	\begin{align}
		n \bracknorm{\bracknorm{1 - \lambda_n} \frac{\alpha_n^2}{2}\bracknorm{1 + \sqrt{\alpha_n}} \chisquare{\qo}{\qz} + \lambda_n \frac{\beta_n^2}{2}\bracknorm{1 + \sqrt{\beta_n}} \chisquare{\qz}{\qo} } \geq \avgD{\qbarcovn{\xbarn}{\alpha_n}{\beta_n}}{\qbarcovn{\xbarn}{0}{0}} \nex
		\geq n \bracknorm{\bracknorm{1 - \lambda_n} \frac{\alpha_n^2}{2}\bracknorm{1 - \sqrt{\alpha_n}} \chisquare{\qo}{\qz} + \lambda_n \frac{\beta_n^2}{2}\bracknorm{1 - \sqrt{\beta_n}} \chisquare{\qz}{\qo} }. \label{eq:cov1}
	\end{align}
\end{lemma}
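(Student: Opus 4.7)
The plan is to exploit the product structure of $\qbarcovn{\xbarn}{\alpha_n}{\beta_n}$ and $\qbarcovn{\xbarn}{0}{0}$. Since the covert process $\pibar{\xbarn}{\alpha_n}{\beta_n}$ acts independently on each coordinate and $\wzx$ is memoryless, the per-symbol marginal of $\qbarcovn{\xbarn}{\alpha_n}{\beta_n}$ at position $i$ is $\bracknorm{1 - \alpha_n}\qz + \alpha_n \qo$ when $\xbar_i = 0$ and $\bracknorm{1 - \beta_n}\qo + \beta_n \qz$ when $\xbar_i = 1$, while the corresponding marginal of $\qbarcovn{\xbarn}{0}{0}$ is $\qz$ or $\qo$, respectively. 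By the tensorization of KL divergence, $\avgD{\qbarcovn{\xbarn}{\alpha_n}{\beta_n}}{\qbarcovn{\xbarn}{0}{0}}$ decomposes into $n\bracknorm{1-\lambda_n}$ copies of $\avgD{\bracknorm{1 - \alpha_n}\qz + \alpha_n \qo}{\qz}$ plus $n \lambda_n$ copies of $\avgD{\bracknorm{1 - \beta_n}\qo + \beta_n \qz}{\qo}$, reducing the problem to a pair of single-letter estimates.

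The core step is to sandwich each per-letter divergence between $\frac{\alpha_n^2}{2}\bracknorm{1 \pm \sqrt{\alpha_n}}\chisquare{\qo}{\qz}$ (and its $\beta_n$-analog). For the first, set $f(z) \eqdef \bracknorm{\qo(z) - \qz(z)}/\qz(z)$, so that $\sum_z \qz(z) f(z) = 0$ and $\sum_z \qz(z) f(z)^2 = \chisquare{\qo}{\qz}$, and write
\begin{align*}
\avgD{\bracknorm{1 - \alpha_n}\qz + \alpha_n \qo}{\qz} = \sum_z \qz(z)\bracknorm{1 + \alpha_n f(z)} \log\bracknorm{1 + \alpha_n f(z)}.
\end{align*}
The Taylor expansion $\bracknorm{1+u}\log\bracknorm{1+u} = u + \frac{u^2}{2} + R(u)$ with $\abs{R(u)} \leq \frac{\abs{u}^3}{2}$ for $\abs{u} \leq 1$ then produces a vanishing linear contribution, a quadratic contribution of $\frac{\alpha_n^2}{2}\chisquare{\qo}{\qz}$, and a remainder bounded by $\frac{\alpha_n^3 M}{2}\chisquare{\qo}{\qz}$, where $M \eqdef \max_z \abs{f(z)}$ is finite because $\calZ$ is finite and $\qo \ll \qz$. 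For $n$ large enough that $\alpha_n M^2 \leq 1$, this remainder is at most $\sqrt{\alpha_n} \cdot \frac{\alpha_n^2}{2}\chisquare{\qo}{\qz}$, producing the claimed two-sided bound.

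An identical argument with the roles of $\qz$ and $\qo$ swapped and $\alpha_n$ replaced by $\beta_n$ gives the analogous estimate for the second per-letter divergence in terms of $\chisquare{\qz}{\qo}$. Summing both contributions weighted by $n\bracknorm{1-\lambda_n}$ and $n\lambda_n$ yields the inequality in the lemma. The main obstacle is the Taylor step: one must verify uniformly over $z$ that $\alpha_n\abs{f(z)} \leq 1$ (and the $\beta_n$ counterpart), then carefully quantify the threshold on $n$ beyond which the cubic-and-higher terms can be absorbed into a multiplicative factor of $\sqrt{\alpha_n}$ (respectively $\sqrt{\beta_n}$). Both verifications rely crucially on the mutual absolute continuity $\qz \ll \qo$, $\qo \ll \qz$ together with the finiteness of $\calZ$, which ensure that $M$ and its counterpart are finite constants independent of $n$.
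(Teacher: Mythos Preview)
Your proposal is correct and follows essentially the same approach as the paper: both exploit the product structure to reduce to the two single-letter divergences $\avgD{(1-\alpha_n)\qz+\alpha_n\qo}{\qz}$ and $\avgD{(1-\beta_n)\qo+\beta_n\qz}{\qo}$, and then use a Taylor-type expansion to sandwich each between $\tfrac{\alpha_n^2}{2}(1\pm\sqrt{\alpha_n})\chisquare{\qo}{\qz}$ (resp.\ the $\beta_n$ analog). The only cosmetic difference is that the paper invokes \cite[Lemma~1]{Bloch2016} to obtain intermediate bounds in terms of $\chi_3,\chi_4,\eta_3,\eta_4$ before loosening, whereas you carry out the expansion of $(1+u)\log(1+u)$ directly; your version is a self-contained equivalent of that lemma.
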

The proof of Lemma~\ref{lem:covertprocess} is provided in Appendix~\ref{sec:covprf}. 
For any $\xbarn$, upon choosing sequences $\brackcurl{\alpha_n}_{n \in \mathbb{N^*}}$ and $\brackcurl{\beta_n}_{n \in \mathbb{N^*}}$, such that $\limn n \alpha_n^2 = \limn n \beta_n^2 = 0 $, we obtain 
\begin{align}
	\limn \avgD{\qbarcovn{\xbarn}{\alpha_n}{\beta_n}}{\qbarcovn{\xbarn}{0}{0}} = 0, \label{eq:10_a}
\end{align}
which shows that $\qbarcovn{\xbarn}{\alpha_n}{\beta_n}$ is indistinguishable from $\qbarcovn{\xbarn}{0}{0}$ at Willie when the fraction of flips is small enough. 
In addition, it is also possible to choose $\brackcurl{\alpha_n}_{n \in \mathbb{N^*}}$ and $\brackcurl{\beta_n}_{n \in \mathbb{N^*}}$ such that $\limn n \alpha_n = \limn n \beta_n = \infty $ to flip an infinite number of innocent symbols as $n \to \infty$, while still ensuring that $\qbarcovn{\xbarn}{\alpha_n}{\beta_n}$ is indistinguishable from $\qbarcovn{\xbarn}{0}{0}$ according to~\eqref{eq:cov1}.
If we set $\xbarn = \xn_{0j}$, where $\xn_{0j}$ is the innocent codeword corresponding to $W_2 = j$, the distribution $\qbarcovn{\xn_{0j}}{0}{0}$ is the innocent distribution corresponding to $W_2 = j$ and is equivalent to the distribution $\qbarn{j}$ in~\eqref{eq:2}. 

\section{Main result} \label{sec:mainresult}
We now characterize the exact scaling of the number of covert bits when the common message is transmitted at a rate approaching the capacity of the channel to Willie. 
For the transmission of covert bits without a secret key, Bob is required to possess a certain advantage over Willie, which we precisely characterize in the following theorem. 
\begin{theorem} \label{thm:ach}
	For the channel model described in Section~\ref{sec:channelmodel}, if there exists $\gamma \geq 0$ such that 
	\begin{align}
		\bracknorm{1 - \lambda^*} \avgD{\po}{\pz} + \lambda^* \gamma \avgD{\pz}{\po} > \bracknorm{1 - \lambda^*} \avgD{\qo}{\qz} + \lambda^* \gamma \avgD{\qz}{\qo}, \label{eq:ach1}
	\end{align}
	there exist keyless covert communication schemes such that
	\begin{align}
		\limn \frac{\log M_2}{n} & = \mathbb{I}\bracknorm{\Lambda,\wzx}, \label{eq:ach2}
	\end{align}
	and for all $j \in \intseq{1}{M_2}$, 
	\begin{align}
		&\!\!\!\!\!\!\limn \frac{\log M_1}{\sqrt{n \avgD{\qhatn{j}}{\qbarn{j}}}}  = \max_{\gamma \geq 0} \frac{\sqrt{2} \bracknorm{ \bracknorm{1 - \lambda^*} \avgD{\po}{\pz} + \lambda^* \gamma \avgD{\pz}{\po}}}{\sqrt{\bracknorm{1 - \lambda^*} \chisquare{\qo}{\qz} + \lambda^* \gamma^2 \chisquare{\qz}{\qo}}}, \label{eq:ach3} \\
		&\!\!\!\!\!\! \limn P_e^{(1)} = \limn P_e^{(2)} = \limn \avgD{\qhatn{j}}{\qbarn{j}} = 0. \label{eq:ach4}
	\end{align}
\end{theorem}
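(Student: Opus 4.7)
The plan is a random coding argument built on two superimposed codebooks, with the innocent layer tracking Willie's capacity-achieving input $\Lambda$ and the covert layer realized as the output of the asymmetric ``flip'' channel $\vxx$ of Section~\ref{sec:preliminaries}. Concretely, I would draw the innocent codewords $\brackcurl{\xn_{0j}}_{j=1}^{M_2}$ i.i.d.\ from $\Lambda^{\pn}$ and, for each $j$, draw the covert codewords $\brackcurl{\xn_{ij}}_{i=1}^{M_1}$ independently from $\pibar{\xn_{0j}}{\alpha_n}{\beta_n}$. The cross-overs are chosen so that $\alpha_n,\beta_n\to 0$, $\beta_n/\alpha_n\to\gamma$, and $n\alpha_n^2\to 0$ while $n\alpha_n\to\infty$; the precise scaling is tied to the target value of $\avgD{\qhatn{j}}{\qbarn{j}}$ via Lemma~\ref{lem:covertprocess}. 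The decoder at Willie only attempts to decode $W_2$, while Bob performs joint decoding of $(W_1,W_2)$.

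Covertness would follow from two ingredients combined in the style of~\cite{Bloch2016}: a channel-resolvability bound showing that $\qhatn{j}$ is close in KL divergence to the covert-process output $\qbarcovn{\xn_{0j}}{\alpha_n}{\beta_n}$ once $\log M_1$ exceeds the relevant resolvability rate, obtained by a second-moment computation on $\chisquare{\qhatn{j}}{\qbarcovn{\xn_{0j}}{\alpha_n}{\beta_n}}$ followed by a chi-squared-to-KL conversion, together with Lemma~\ref{lem:covertprocess} bounding $\avgD{\qbarcovn{\xn_{0j}}{\alpha_n}{\beta_n}}{\qbarn{j}}$. A concentration step over the draw of $\xn_{0j}$, whose empirical weight $\lambda_n$ concentrates at $\lambda^*$, would then replace $\lambda_n$ by $\lambda^*$ in the asymptotic constants.

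For reliability, Willie decodes $W_2$ from the i.i.d.\ $\Lambda$ code over $\wzx$, achieving~\eqref{eq:ach2} by a standard channel coding theorem; Bob, whose channel is at least as good under $\Lambda$ by assumption, also decodes $W_2$ reliably and then decodes $W_1$. Because the covert layer flips only an $\alpha_n$ fraction of $0$'s and a $\beta_n$ fraction of $1$'s of the innocent codeword, a low-weight Taylor expansion of the KL divergence about $\xn_{0j}$ shows that the information available to Bob to decode $W_1$ conditional on $\xn_{0j}$ scales as $n\bracknorm{(1-\lambda^*)\alpha_n\avgD{\po}{\pz}+\lambda^*\beta_n\avgD{\pz}{\po}}$ to leading order, while the resolvability rate scales analogously with $\wyx$ replaced by $\wzx$ and KL divergences replaced by chi-squared distances. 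Writing $\beta_n=\gamma\alpha_n$ and balancing the reliability threshold against the resolvability threshold, then optimizing the resulting ratio over $\gamma\geq 0$, yields~\eqref{eq:ach3}; condition~\eqref{eq:ach1} is precisely what ensures the reliability threshold strictly dominates the resolvability threshold for some $\gamma$, leaving room for a positive number of covert bits.

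The main obstacle I expect is the resolvability step in the present broadcast setting: the target distribution $\qbarcovn{\xn_{0j}}{\alpha_n}{\beta_n}$ is a non-identical product distribution that itself depends on the random innocent codeword $\xn_{0j}$, and the covertness requirement in~\eqref{eq:6} must hold for \emph{every} $j\in\intseq{1}{M_2}$, not merely on average. I plan to address this by first establishing a bound on $\chisquare{\qhatn{j}}{\qbarcovn{\xn_{0j}}{\alpha_n}{\beta_n}}$ averaged over the covert codebook but conditional on $\xn_{0j}$, then upgrading to a uniform-in-$j$ statement via a union bound or expurgation over the $M_2$ common messages, exploiting exponential concentration of the chi-squared bound. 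Converting back from chi-squared to KL divergence via a local quadratic expansion around $\qbarcovn{\xn_{0j}}{0}{0}$ is legitimate because $\alpha_n,\beta_n\to 0$, and closes the loop.
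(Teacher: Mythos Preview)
Your architecture—superimposed random coding with innocent codewords near $\Lambda$, covert codewords drawn from $\pibar{\xn_{0j}}{\alpha_n}{\beta_n}$, then reliability, resolvability, Lemma~\ref{lem:covertprocess}, and expurgation—is the same as the paper's. However, the resolvability step as you describe it would not deliver the theorem under hypothesis~\eqref{eq:ach1}. If you bound $\avgD{\qhatn{j}}{\qbarcovn{\xn_{0j}}{\alpha_n}{\beta_n}}$ via a second-moment bound on $\chisquare{\qhatn{j}}{\qbarcovn{\xn_{0j}}{\alpha_n}{\beta_n}}$ and then use $\avgD{\cdot}{\cdot}\leq\chi_2(\cdot\Vert\cdot)$, the resolvability threshold you obtain is, to leading order, $n\bracknorm{(1-\lambda^*)\alpha_n\chisquare{\qo}{\qz}+\lambda^*\beta_n\chisquare{\qz}{\qo}}$, \emph{not} the KL quantity $n\bracknorm{(1-\lambda^*)\alpha_n\avgD{\qo}{\qz}+\lambda^*\beta_n\avgD{\qz}{\qo}}$ that the paper proves by a direct one-shot KL argument (Jensen on the log, splitting on a threshold set $\btauj$, as in~\cite{Bloch2016}). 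Since $\chi_2\geq\mathbb{D}$ in general, your ``reliability dominates resolvability'' condition would read $(1-\lambda^*)\avgD{\po}{\pz}+\lambda^*\gamma\avgD{\pz}{\po}>(1-\lambda^*)\chisquare{\qo}{\qz}+\lambda^*\gamma\chisquare{\qz}{\qo}$, strictly stronger than~\eqref{eq:ach1}; the keyless conclusion would therefore fail for channels satisfying~\eqref{eq:ach1} but not this $\chi_2$ version.

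Relatedly, the ratio in~\eqref{eq:ach3} does \emph{not} arise from ``balancing the reliability threshold against the resolvability threshold.'' The numerator is indeed the reliability threshold, but the $\sqrt{\chi_2}$ denominator comes from the covertness budget $\avgD{\qbarcovn{\xn_{0j}}{\alpha_n}{\beta_n}}{\qbarn{j}}\sim\tfrac{n\alpha_n^2}{2}\bracksq{(1-\lambda^*)\chisquare{\qo}{\qz}+\lambda^*\gamma^2\chisquare{\qz}{\qo}}$ supplied by Lemma~\ref{lem:covertprocess}; the resolvability threshold plays the separate role of certifying that no key is needed, which is exactly where~\eqref{eq:ach1} enters. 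Two smaller points: (i) the paper draws the innocent codewords from $\Lambda^{\pn}$ conditioned on the $\epsilon$-typical-weight set, so that $\lambda_j$ is deterministically close to $\lambda^*$ for every $j$ before expurgation—your i.i.d.\ draw also works but forces you to carry $\lambda_j$ through all bounds; (ii) Willie decodes $W_2$ not over $\wzx$ but over the perturbed channel $W_{Z|\overline{X}}$ induced by the covert layer, and one must argue (as the paper does) that $\avgI{\Lambda,W_{Z|\overline{X}}}=\avgI{\Lambda,\wzx}+\bigO{\alpha_n}+\bigO{\beta_n}$.
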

\begin{proof}
We first show that Bob can decode the covert message, and both Bob and Willie can decode the common message reliably. 
Using channel resolvability techniques, we then show that the induced distribution $\smash{\qhatn{W_2}}$ corresponding to the common message $W_2$ is indistinguishable from the covert stochastic process $\smash{\qbarcovn{\xn_{0W_2}}{\alpha_n}{\beta_n}}$ when averaged over all choices of the common message $W_2$. 
Finally, we identify a coding scheme that achieves~\eqref{eq:ach2} and~\eqref{eq:ach3} such that~\eqref{eq:ach1} and~\eqref{eq:ach4} are satisfied. 
	\paragraph{Random code generation}
	Define a set $\calD_\epsilon^n \eqdef \brackcurl{  \xn:  \abs{ \smash{ \frac{\textnormal{wt}\bracknorm{\xn}}{n}} - \lambda^* } < \epsilon }$, where $\textnormal{wt}\bracknorm{\xn} \eqdef \card{ \ell \in \intseq{1}{n}: x_{\ell} = 1}$ is the weight of $\xn$. 
	For $j \in \intseq{1}{M_2}$, we generate $M_2$ codewords $\smash{\xn_{0j} \in \calX^n}$ independently at random according to the distribution $P_X^n$ defined by
	\begin{align}
		P_X^{n}\bracknorm{\xn} \eqdef \frac{\Lambda^\pn\bracknorm{\xn} \indic{\xn \in \calD_\epsilon^n} }{\P[\Lambda]{\bxn \in \calD_\epsilon^n}}
	\end{align} 
	Generating $\brackcurl{\xn_{0j}}_{j=1}^{M_2}$ according to $P_X^n$ ensures that every $\xn_{0j}$ is $\epsilon$-letter typical \ac{wrt} the distribution $\Lambda$.  
	We label this set of $M_2$ codewords as the innocent codebook $\calC_2$. 
	For every $W_2 = j \in \intseq{1}{M_2}$, we generate $M_1$ codewords independently at random according to the distribution $\pibar{\xn_{0j}}{\alpha_n}{\beta_n}$ and label this set of codewords as the covert sub-codebook  $\calC_{1,j}$ corresponding to the common message $W_2 = j$. 
	Alice encodes the message pair $\bracknorm{W_1, W_2} = \bracknorm{i,j}$, where $i \in \intseq{1}{M_1}$ and $j \in \intseq{1}{M_2}$, to the codeword $\xn_{ij} \in \calC_{1,j}$ and transmits it through the discrete memoryless broadcast channel. 
	Defining
	\begin{align}
		W_{Y|\overline{X}}(y|\xbar) & \eqdef \sum_{x} \wyx(y|x)\vxx(x|\xbar), \\
		W_{Z|\overline{X}}(z|\xbar)  & \eqdef \sum_{x} \wzx(z|x)\vxx(x|\xbar), 
	\end{align} 
	we show that the decoding error probability of the common message at Bob and Willie averaged over all random codebooks $\calC$ decays exponentially in the following lemma. 
	\begin{lemma} \label{lem:commonmsg}
		For any $\mu \in \bracknorm{0,1}$ and $n$ large enough, and 
		\begin{align}
			\log M_2 & <  \bracknorm{1 - \mu} n \avgI{\Lambda, W_{Y|\overline{X}}}, \label{eq:72a} \\
			\log M_2 & <  \bracknorm{1 - \mu} n \avgI{\Lambda, W_{Z|\overline{X}}}, \label{eq:72b}
		\end{align}
		we have
		\begin{align}
			\E{\calC}{ \mathbb{E}_{W_2}{P_{e,1, W_2}^{\bracknorm{1}}}} & \leq \exp \bracknorm{-\xi_1 n }, \label{eq:72c} \\
			\E{\calC}{ \mathbb{E}_{W_2}{P_{e,W_2}^{\bracknorm{2}}}} & \leq \exp \bracknorm{-\xi_1 n }, \label{eq:72d}
		\end{align}
		for an appropriate constant $\xi_1 > 0$. 
	\end{lemma}
	The proof of Lemma~\ref{lem:commonmsg} follows the random coding argument outlined in~\cite[Section 7.3]{gallager1968information} and is omitted here. 
	Note that $\forall (\xbar, z) \in \calX \times \calZ $, 
	\begin{align}
		W_{Z|\overline{X}}(z|\xbar) = \vxx(0|\xbar) \qz(z) + \vxx(1|\xbar) \qo(z). \label{eq:73}
	\end{align}
	Consequently, we have
	\begin{align}
		W_{Z|\overline{X}}(z|0) & = \qz(z) + \alpha_n \bracknorm{ \qo(z) -\qz(z) } , \label{eq:73a} \\
		W_{Z|\overline{X}}(z|1) & = \qo(z) + \beta_n \bracknorm{\qz(z) - \qo(z)} . \label{eq:73b}
	\end{align} 
	Since $\limn \alpha_n = \limn \beta_n = 0$, for a large $n$, the channels $W_{Z|\overline{X}}$ and $\wzx$ are identical in the limit of large blocklength. 
	The same argument extends to channels $W_{Y|\overline{X}}$ and $\wyx$. 
	Defining $\smash{Q_Z(z) \eqdef \sum_{\xbar} \Lambda(\xbar) W_{Z|\overline{X}}(z|\xbar)}$ and expanding the mutual information term in~\eqref{eq:72b} using~\eqref{eq:73a} and~\eqref{eq:73b}, we obtain
	\begin{align}
		\avgI{\Lambda, W_{Z|\overline{X}}} & = \sum_z \Bigg( \bracknorm{1 - \lambda^*} \bracknorm{\qz(z) + \alpha_n \bracknorm{ \qo(z) - \qz(z)}} \log \bracknorm{\frac{\qz(z) + \alpha_n \bracknorm{\qo(z) - \qz(z)}}{Q_Z(z)}} \nex
		& \dupspace \dupspace + \lambda^* \bracknorm{\qo(z) + \beta_n \bracknorm{\qz(z) - \qo(z) } } \log \bracknorm{\frac{\qo(z) + \beta_n \bracknorm{\qz(z) - \qo(z)}}{Q_Z(z)}} \Bigg) \displaybreak \\
		& = \sum_z \bracknorm{ \bracknorm{1 - \lambda^*} \qz(z) \log \frac{\qz(z)}{Q_Z(z)}  + \lambda^* \qo(z) \log \frac{\qo(z)}{Q_Z(z)} } +\bigO{\alpha_n} + \bigO{\beta_n} \\
		& = \avgI{\Lambda, \wzx} +\bigO{\alpha_n} + \bigO{\beta_n}. \label{eq:72e}
	\end{align}
  	Similarly, we obtain
	\begin{align}
		\avgI{\Lambda, W_{Y|\overline{X}}} & = \avgI{\Lambda, \wyx} +\bigO{\alpha_n} + \bigO{\beta_n}. \label{eq:72f}
	\end{align} 
	Combining~\eqref{eq:72a},~\eqref{eq:72b},~\eqref{eq:72e}, and~\eqref{eq:72f}, we obtain
	\begin{align}
		\frac{\log M_2}{n} & < \bracknorm{1 - \mu} \avgI{\Lambda, \wyx} +\bigO{\alpha_n} + \bigO{\beta_n}, \label{eq:72g} \\
		\frac{\log M_2}{n} & < \bracknorm{1 - \mu} \avgI{\Lambda, \wzx} +\bigO{\alpha_n} + \bigO{\beta_n}. \label{eq:72h}
	\end{align}
	Our assumption that $\mathbb{I}\bracknorm{\Lambda, \wyx} \geq \mathbb{I}\bracknorm{\Lambda, \wzx}$ and~\eqref{eq:72h} render~\eqref{eq:72g} unnecessary.  
	Hence, the average decoding error probability of the common message at both Bob and Willie vanishes in the limit of large blocklength if
	\begin{align}
		\limn \frac{\log M_2}{n} = \bracknorm{1 - \xi} \avgI{\Lambda, \wzx}, 
	\end{align}
	for an arbitrary $\xi > 0$. 
	Henceforth, we assume that both Bob and Willie have decoded the common message successfully. 
	\paragraph{Channel reliability analysis}
	We now prove that the decoding error probability of the covert message at Bob decays exponentially. 
	For $i \in \intseq{1}{M_1}$, the following events lead to a decoding error at Bob,
	\begin{itemize}
		\item codeword $\xn_{0j}$ is transmitted, and the decoder incorrectly estimates $\widehat{W}_1 = i$,
		\item codeword $\xn_{ij}$ is transmitted, and the decoder incorrectly estimates $\widehat{W}_1 = 0$,
		\item codeword $\xn_{ij}$ is transmitted, and the decoder incorrectly estimates $\widehat{W}_1 = i' \in \intseq{1}{M_1}$, where $i' \neq i$. 
	\end{itemize}
	The decoding error probability of the covert message at Bob averaged over all random codebooks satisfies the following lemma. 
	\begin{lemma} \label{lem:rel}
		For any $\mu \in (0,1)$, an $n$ large enough, and
		\begin{align}
			\!\!\log M_1 \! = \!\bracknorm{1 - \mu} \!n \!\bracknorm{\bracknorm{1 - \lambda^*}\alpha_n \avgD{\po}{\pz} \!+\! \lambda^* \beta_n \avgD{\pz}{\po} }\!, \label{eq:77}
		\end{align}
		we have
		\begin{align}
			\E{\calC}{ \mathbb{E}_{W_2} P_{e,2,W_2}^{(1)}} \leq \exp \bracknorm{-\xi_2 n \alpha_n} +  \exp \bracknorm{-\xi_2 n \beta_n}, \label{eq:78}
		\end{align}
		for an appropriate $\xi_2 >0$.
	\end{lemma}
	The proof of Lemma~\ref{lem:rel} is provided in Appendix~\ref{sec:relprf}. 
	\paragraph{Channel resolvability analysis}
	We now show that the KL divergence between the induced distribution $\qhatn{W_2}$ and the covert stochastic process $\smash{\qbarcovn{\bxn_{0W_2}}{\alpha_n}{\beta_n}}$ averaged over all choices of the common message and all random codebooks vanishes in the limit of large blocklength. 
	\begin{lemma} \label{lem:res}
		For any $\nu > 0$, an $n$ large enough, and 
		\begin{align}
			\!\log M_1 \!=\! \bracknorm{1 + \nu} \!n\! \bracknorm{\bracknorm{1 - \lambda^*}\alpha_n \avgD{\qo}{\qz} \!+\! \lambda^* \! \beta_n \avgD{\qz}{\qo} }\!, \label{eq:81}
		\end{align}
		we have
		\begin{align}
			{\normalfont \E{\calC\!\!}{\mathbb{E}_{W_2} \avgD{\!\qhatn{W_2}}{\qbarcovn{\bxn_{0W_2}}{\alpha_n}{\beta_n\!}}\!}} \!\leq\! \exp\bracknorm{-\xi_3 n \alpha_n} + \exp \bracknorm{-\xi_3 n \beta_n}, \label{eq:82}			
		\end{align}
		for an appropriate $\xi_3 > 0$. 
	\end{lemma}
	The proof of Lemma~\ref{lem:res} is provided in Appendix~\ref{sec:resprf}. 
	\paragraph{Identification of a specific code}	 
	Using Markov's inequality, we obtain
	\begin{multline}
	\!\!\!\!\!\!\!\!\mathbb{P} \Bigg( \mathbb{E}_{W_2}{P_{e,1, W_2}^{\bracknorm{1}}} < 8 \E{\calC}{ \mathbb{E}_{W_2}{P_{e,1, W_2}^{\bracknorm{1}}}} \bigcap \mathbb{E}_{W_2}{P_{e,2, W_2}^{\bracknorm{1}}} < 8 \E{\calC}{ \mathbb{E}_{W_2}{P_{e,2, W_2}^{\bracknorm{1}}}} \bigcap \mathbb{E}_{W_2}{P_{e,W_2}^{\bracknorm{2}}} <8 \E{\calC}{ \mathbb{E}_{W_2}{P_{e,W_2}^{\bracknorm{2}}}} \\
	 \bigcap {\mathbb{E}_{W_2} \avgD{\!\qhatn{W_2}}{\qbarcovn{\xn_{0W_2}}{\alpha_n}{\beta_n\!}}\!} < 8 \E{\calC\!\!}{\mathbb{E}_{W_2} \avgD{\!\qhatn{W_2}}{\qbarcovn{\bxn_{0W_2}}{\alpha_n}{\beta_n\!}}\!} \Bigg) \geq \frac{1}{2}. \label{eq:95}
	\end{multline}
  Defining $\epsilon_n \eqdef \exp \bracknorm{-\xi_4 n \alpha_n} +  \exp \bracknorm{-\xi_4 n \beta_n}$ for an appropriate constant $\xi_4 > 0$, we conclude from~\eqref{eq:95} that there exists at least one coding scheme $\calC^*$ such that for a large $n$,
	\begin{align}
		\mathbb{E}_{W_2}{P_{e,1, W_2}^{\bracknorm{1}}} & \leq \epsilon_n, \label{eq:96} \\
		\mathbb{E}_{W_2}{P_{e,2, W_2}^{\bracknorm{1}}} & \leq \epsilon_n, \label{eq:97} \\
		\mathbb{E}_{W_2}{P_{e,W_2}^{\bracknorm{2}}} & \leq \epsilon_n, \label{eq:97a} \\
		{\mathbb{E}_{W_2} \avgD{\qhatn{W_2}}{\qbarcovn{\xn_{0W_2}}{\alpha_n}{\beta_n}}} & \leq \epsilon_n, \label{eq:97b}
	\end{align}
	where $\xn_{0W_2} \in \calC_2^*$ is the codeword corresponding to the common message $W_2$. 
	We expurgate half of the innocent codewords and their corresponding covert sub-codebooks such that for every remaining $W_2 = j$, we have
	\begin{align}
		{P_{e,1, j}^{\bracknorm{1}}} & \leq 8\epsilon_n, \label{eq:97c} \\
		{P_{e,2, j}^{\bracknorm{1}}} & \leq 8\epsilon_n, \label{eq:97d} \\
		{P_{e,j}^{\bracknorm{2}}} & \leq 8\epsilon_n, \label{eq:97e} \\
		\avgD{\qhatn{j}}{\qbarcovn{\xn_{0j}}{\alpha_n}{\beta_n}} & \leq 8\epsilon_n, \label{eq:97f}
	\end{align}
	without affecting the asymptotic rate of the common message.
	Note that covertness is not affected by expurgating whole covert sub-codebooks. 
	Since $\limn \epsilon_n = 0$,~\eqref{eq:97c},~\eqref{eq:97d}, and~\eqref{eq:97e} imply $\limn P_e^{(1)} = \limn P_e^{(2)} = 0$. 
	
	Using Pinsker's inequality with~\eqref{eq:97f} ensures that $\V{\qhatn{j},\qbarcovn{\xn_{0j}}{\alpha_n}{\beta_n}} \leq \exp \bracknorm{-\xi_5 n \alpha_n} + \exp \bracknorm{-\xi_5 n \beta_n} $ for an appropriate $\xi_5 > 0$. 
	Then, we write
	\begin{align}
		\!\!\avgD{\qhatn{j}}{\qbarn{j}\!}\! & =\! \avgD{\qhatn{j}}{\qbarcovn{\xn_{0j}}{\alpha_n}{\beta_n}} + \avgD{\qbarcovn{\xn_{0j}}{\alpha_n}{\beta_n}}{\qbarcovn{\xn_{0j}}{0}{0}} \nex
		& \dupspace +\!\! \sum_\zn \!\bracknorm{\!\qhatn{j}\!\bracknorm{\zn} \!-\! \qbarcovn{\xn_{0j}}{\alpha_n}{\beta_n}\!\!\bracknorm{\zn}\!} \!\log \!\frac{\qbarcovn{\xn_{0j}}{\alpha_n}{\beta_n}\!\!\bracknorm{\zn}}{\qbarcovn{\xn_{0j}}{0}{0}\!\bracknorm{\zn}}. \label{eq:98}
	\end{align}
	We bound the absolute value of the last term in~\eqref{eq:98} for a large $n$ by
	\begin{align}
		\abs{\sum_\zn \bracknorm{\qhatn{j} \bracknorm{\zn} - \qbarcovn{\xn_{0j}}{\alpha_n}{\beta_n}\bracknorm{\zn}} \log \frac{\qbarcovn{\xn_{0j}}{\alpha_n}{\beta_n}\bracknorm{\zn}}{\qbarcovn{\xn_{0j}}{0}{0}\bracknorm{\zn}}} \stackrel{(a)}{\leq}  \exp \bracknorm{-\xi_6 n \alpha_n} + \exp \bracknorm{-\xi_6 n \beta_n}, \label{eq:99}
	\end{align}
	for an appropriate $\xi_6 > 0$, where $(a)$ follows from using steps similar to~\cite[(249)-(254)]{ArumugamBloch2018}. 
	Combining~\eqref{eq:97f} to~\eqref{eq:99}, we conclude that for a large $n$, 
	\begin{align}
		 \abs{\avgD{\qhatn{j}}{\qbarn{j}} - \avgD{\qbarcovn{\xn_{0j}}{\alpha_n}{\beta_n}}{\qbarcovn{\xn_{0j}}{0}{0}} }  \leq \exp\bracknorm{-\xi_7 n \alpha_n} + \exp\bracknorm{-\xi_7 n \beta_n}, \label{eq:100}
	\end{align}
	for an appropriate constant $\xi_7 > 0$.
	\paragraph{Asymptotic behavior} We now establish the asymptotic scaling of $\log M_1$ for the proposed covert communication scheme. 
	Combining~\eqref{eq:cov1} and~\eqref{eq:100}, for a fixed $W_2 = j$, we bound $\smash{\avgD{\qhatn{j}}{\qbarn{j}}}$ by
	\begin{align}
		\avgD{\qhatn{j}}{\qbarn{j}} & \leq n \bracknorm{ (1-\lambda^*)\frac{\alpha_n^2}{2} \bracknorm{1 + \sqrt{\alpha_n}} \chisquare{\qo}{\qz}  + \lambda^* \frac{\beta_n^2}{2} \bracknorm{1 + \sqrt{\beta_n}} \chisquare{\qz}{\qo}} \nex
		& \dupspace + \exp\bracknorm{-\xi_7 n \alpha_n} + \exp\bracknorm{-\xi_7 n \beta_n}. \label{eq:128} \\
		\avgD{\qhatn{j}}{\qbarn{j}} & \geq n \bracknorm{ (1-\lambda^*)\frac{\alpha_n^2}{2} \bracknorm{1 - \sqrt{\alpha_n}} \chisquare{\qo}{\qz}  + \lambda^* \frac{\beta_n^2}{2} \bracknorm{1 - \sqrt{\beta_n}} \chisquare{\qz}{\qo}}  \nex
		& \dupspace - \exp\bracknorm{-\xi_7 n \alpha_n} - \exp\bracknorm{-\xi_7 n \beta_n}. \label{eq:129}
	\end{align}
	Ultimately, combining~\eqref{eq:77},~\eqref{eq:81},~\eqref{eq:128}, and~\eqref{eq:129}, and taking the limit as $n \to \infty$, we obtain
	\begin{align} 
		\limn \frac{\log M_1}{\sqrt{n \avgD{\qhatn{j}}{\qbarn{j}}}} & \leq \sqrt{2} \bracknorm{1 - \mu} \frac{\bracknorm{1 - \lambda^*} \avgD{\po}{\pz} + \lambda^* \gamma \avgD{\pz}{\po}}{\sqrt{\bracknorm{1 - \lambda^*} \chisquare{\qo}{\qz} + \lambda^* \gamma^2 \chisquare{\qz}{\qo}}}, \label{eq:130} \\
		\limn \frac{\log M_1}{\sqrt{n \avgD{\qhatn{j}}{\qbarn{j}}}} & \geq \sqrt{2} \bracknorm{1 + \nu} \frac{\bracknorm{1 - \lambda^*} \avgD{\qo}{\qz} + \lambda^* \gamma \avgD{\qz}{\qo}}{\sqrt{\bracknorm{1 - \lambda^*} \chisquare{\qo}{\qz} + \lambda^* \gamma^2 \chisquare{\qz}{\qo}}}. \label{eq:131}	
	\end{align}
\end{proof}

\begin{theorem} \label{thm:con}
	For the channel model described in Section~\ref{sec:channelmodel}, consider a sequence of codes with increasing block length $n$ such that $\smash{\limn P_e^{(1)} = \limn P_e^{(2)} = 0}$, and for all $j \in \intseq{1}{M_2}$, $\smash{\limn \avgD{\qhatn{j}}{\qbarn{j}} = 0}$. 
	If the common message is transmitted using a codebook that achieves the capacity of the channel to Willie, then for every $j \in \intseq{1}{M_2}$, we obtain
	\begin{align}
		\limn \frac{\log M_1}{\sqrt{n \avgD{\qhatn{j}}{\qbarn{j}}}} \leq \max_{\gamma \geq 0} \sqrt{2} \frac{\bracknorm{1 - \lambda^*} \avgD{\po}{\pz} + \lambda^* \gamma \avgD{\pz}{\po}}{\sqrt{\bracknorm{1 - \lambda^*} \chisquare{\qo}{\qz} + \lambda^* \gamma^2 \chisquare{\qz}{\qo}}}. \label{eq:con1}
	\end{align} 
	For a sequence of schemes such that~\eqref{eq:con1} holds with equality and for some $\gamma^*$ that maximizes the right hand side of~\eqref{eq:con1}, we obtain
	\begin{align}
		\limn \frac{\log M_1}{\sqrt{n \avgD{\qhatn{j}}{\qbarn{j}}}} \geq \sqrt{2} \frac{\bracknorm{1 - \lambda^*} \avgD{\qo}{\qz} + \lambda^* \gamma^* \avgD{\qz}{\qo}}{\sqrt{\bracknorm{1 - \lambda^*} \chisquare{\qo}{\qz} + \lambda^* \bracknorm{\gamma^*}^2 \chisquare{\qz}{\qo}}}. \label{eq:con2}
	\end{align} 
\end{theorem}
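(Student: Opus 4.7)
The plan is to follow a two-step program mirroring the statement: first, establish the upper bound~\eqref{eq:con1} via a Fano-style argument at Bob coupled with a single-letter lower bound on the covertness divergence; second, establish the matching lower bound~\eqref{eq:con2} via a channel-resolvability converse on Willie's channel. The main quantities to track are, for each common message $j$, the average flip probabilities $\alpha_{n,j}$ (fraction of positions where $\xn_{0j}$ carries a $0$ but the transmitted codeword carries a $1$, averaged over $W_1$) and $\beta_{n,j}$ (the symmetric $1\to 0$ quantity). Since the common-message code achieves $\avgI{\Lambda,\wzx}$ and $\Lambda$ is the unique capacity-achieving input distribution, a strong-converse argument (after light expurgation) forces the empirical type of $\xn_{0j}$ to concentrate around $\Lambda$ for all but a vanishing fraction of $j$; in particular the fraction of ones $\lambda_{n,j}$ converges to $\lambda^*$.

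\textbf{Upper bound via Fano and a covertness lower bound.}
I would condition on $W_2=j$ and invoke Fano's inequality at Bob to obtain $\log M_1 \leq \avgI{W_1;\byn \mid W_2=j,\,W_1\neq 0} + o(\log M_1)$, single-letterize, and Taylor-expand each per-letter mutual information around the zero-flip input, yielding the leading-order bound $\log M_1 \lesssim n\bracknorm{(1-\lambda^*)\alpha_{n,j}\avgD{\po}{\pz} + \lambda^*\beta_{n,j}\avgD{\pz}{\po}}$. For the covertness side, the product structure of $\qbarn{j}$ permits the standard decomposition $\avgD{\qhatn{j}}{\qbarn{j}} \geq \sum_\ell \avgD{\hat{Q}_{j,\ell}}{\bar{Q}_{j,\ell}}$; applying the chi-squared Taylor expansion underlying Lemma~\ref{lem:covertprocess} at each coordinate, together with Cauchy--Schwarz to reassemble the averaged flip fractions, then gives $\avgD{\qhatn{j}}{\qbarn{j}} \gtrsim \tfrac{n}{2}\bracknorm{(1-\lambda^*)\alpha_{n,j}^2\chisquare{\qo}{\qz} + \lambda^*\beta_{n,j}^2\chisquare{\qz}{\qo}}$. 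The covertness hypothesis therefore forces $\alpha_{n,j},\beta_{n,j}\to 0$, self-consistently validating the Taylor expansions. Forming $\log M_1/\sqrt{n\avgD{\qhatn{j}}{\qbarn{j}}}$, parameterizing by $\gamma_{n,j}=\beta_{n,j}/\alpha_{n,j}$, and optimizing over all limit points $\gamma\geq 0$ of $\brackcurl{\gamma_{n,j}}$ produces~\eqref{eq:con1}.

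\textbf{Matching lower bound and main obstacle.}
For~\eqref{eq:con2}, I would first exploit strict concavity of the right-hand side of~\eqref{eq:con1} in $\gamma$ to argue that equality at some $\gamma^*$ forces $\gamma_{n,j}\to\gamma^*$. It then remains to establish $\log M_1 \gtrsim n\bracknorm{(1-\lambda^*)\alpha_{n,j}\avgD{\qo}{\qz}+\lambda^*\beta_{n,j}\avgD{\qz}{\qo}}$, after which substituting $\gamma_{n,j}\to\gamma^*$ into the covertness bound above yields~\eqref{eq:con2}. Heuristically, this bound is a converse to channel resolvability: for $\qhatn{j}$ to approach $\qbarn{j}$ in KL divergence, the covert-codebook cardinality must be no smaller than the conditional mutual information $\avgI{W_1;\bzn \mid W_2=j}$ induced at Willie, which single-letterizes to the displayed expression by the same Taylor analysis used for Bob. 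I expect this to be the main technical obstacle: standard resolvability converses of Han--Verd\'u or Hayashi type only deliver asymptotic bounds of the form $\log M_1/n \gtrsim \avgI{X;Z}$, which are too coarse in the present vanishing-rate regime, so a dedicated non-asymptotic argument tracking the correct $n\alpha_n$-scale constant will be required, most plausibly obtained by running the soft-covering analysis behind Lemma~\ref{lem:res} in reverse.
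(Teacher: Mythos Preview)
Your upper-bound program is essentially the paper's: Fano at Bob conditioned on $W_2=j$, single-letterize, Taylor-expand each per-letter mutual information to leading order in the flip probabilities; lower-bound $\avgD{\qhatn{j}}{\qbarn{j}}$ by $\sum_\ell \avgD{\qhatj{\ell}}{\qzc{\ell}}$ via the product structure of $\qbarn{j}$, Taylor-expand to the chi-squared, and use Cauchy--Schwarz to collapse the per-position flip probabilities into the averaged quantities $\rhoz,\rhoo$. The paper also pins $\lambda_{n,j}\to\lambda^*$ exactly as you suggest, via the uniqueness of the capacity-achieving input distribution (through a constant-composition sub-codebook argument).

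The gap is in your lower-bound step. You label $\log M_1 \gtrsim \avgI{W_1;\bzn\mid W_2=j}$ as a ``converse to channel resolvability'' and the ``main technical obstacle,'' anticipating a delicate non-asymptotic soft-covering argument. In fact this inequality is immediate: since $W_1$ is uniform on $\intseq{1}{M_1}$, one has $\log M_1 = \avgH{W_1\mid W_2=j} \geq \avgI{W_1;\bzn\mid W_2=j} = \avgI{\bxn;\bzn\mid W_2=j}$, with no covertness hypothesis required. Covertness enters only when single-letterizing: writing the mutual information as an expectation of $\log\frac{\wzxn(\zn\mid\xn)}{\qhatn{j}(\zn)}$ and replacing the non-product $\qhatn{j}$ by the product $\qbarn{j}$ costs exactly $\delta_{n,j}$, after which the sum factorizes and the identity $\sum_x\Pi_{j,\ell}(x)\wzx(z\mid x)\log\frac{\wzx(z\mid x)}{\qzc{\ell}(z)}=\avgI{X_\ell;Z_\ell\mid W_2=j}+\avgD{\qhatj{\ell}}{\qzc{\ell}}\geq \avgI{X_\ell;Z_\ell\mid W_2=j}$ gives the target bound up to $-2\delta_{n,j}$. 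No resolvability machinery is needed.

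A smaller point: the paper does not argue $\gamma_{n,j}\to\gamma^*$ via concavity. Instead, assuming equality in~\eqref{eq:con1} it extracts from the upper-bound chain an inequality relating $\sqrt{n\delta_{n,j}}$ to $n\rhoz$ (essentially inverting~\eqref{eq:48} against the covertness lower bound), substitutes that directly into the elementary lower bound on $\log M_1$ above, and only at the end replaces $\gamma_j^\dagger$ by the optimizing $\gamma^*$. Your concavity route may also work, but note that $\gamma\mapsto\frac{a+b\gamma}{\sqrt{c+d\gamma^2}}$ is not strictly concave, so you would need uniqueness of the maximizer rather than concavity per se.
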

\begin{proof} \label{prf:con}
	Consider a capacity-achieving codebook $\calC^*$ for the channel between Alice and Willie. 
	For a fixed common message $W_2 = j$, consider a covert communication scheme that is characterized by $\smash{\epsilon_{n,j} \eqdef \P{\widehat{W}_1 \neq W_1 | W_2 = j}}$ and $\delta_{n,j} \eqdef \smash{\avgD{\qhatn{j}}{\qbarn{j}}}$.
	Note that $\limn \epsilon_{n,j} = \limn \delta_{n,j} = 0$. 
	We denote the innocent codeword corresponding to $W_2 = j$ by $\xn_{0j} = \bracknorm{x_{0j,1},x_{0j,2},\ldots, x_{0j,n}}$, the innocent symbol at position $\ell$ by $x_{0j, \ell}$, and the information symbol at position $\ell$ by $x_{0j,\ell}^c \eqdef 1 - x_{0j,\ell}$. 
	For a fixed $W_2 = j$, we denote the input distribution of the sequence $\xn_{0j}$ by $\Pi_j^n$, where the distribution of the symbol at position $\ell$ is defined by
	\begin{align}
		\Pi_{j,\ell}(x_{0j,\ell}^c) = 1 - \Pi_{j,\ell}(x_{0j,\ell}) = \mun{\ell}. \label{eq:26}
	\end{align} 
	We interpret $\mun{\ell}$ as the probability of flipping innocent symbol $x_{0j,\ell}$ to information symbol $x_{0j,\ell}^c$ at symbol position $\ell \in \intseq{1}{n}$. 
	Note that the innocent symbol $x_{0j,\ell}$ depends on the choice of the common message $W_2 = j$ and the symbol position $\ell \in \intseq{1}{n}$. 
	For conciseness, we define the following terms. 
	\begin{align}
		& \pzc{\ell}(y) \eqdef \wyx(y|x_{0j,\ell}), \quad \poc{\ell}(y) \eqdef \wyx(y|x_{0j,\ell}^c), \label{eq:27} \\ 
		& \qzc{\ell}(z) \eqdef \wzx(z|x_{0j,\ell}), \quad \qoc{\ell}(z) \eqdef \wzx(z|x_{0j,\ell}^c). \label{eq:28}
	\end{align}
	Define $K_{j,\ell}(z) \eqdef \qoc{\ell}(z) - \qzc{\ell}(z)$.
	Note that $\forall z \in \calZ$, $K_{j,\ell}(z)$ equals either $\qo(z) - \qz(z)$ or $\qz(z) - \qo(z)$ depending on the choices of $j$ and $\ell$. 
	Defining $K(z) \eqdef \smash{\abs{\qoc{\ell}(z) - \qzc{\ell}(z)}}$, we remove the dependency of $K(z)$ on $j$ and $\ell$. 
	We then define the distribution of each symbol $Z_\ell$ of $\bzn$ by $\qhatj{\ell}$, where
	\begin{align}
		\qhatj{\ell}(z) & \eqdef \sum_x \Pi_{j,\ell}(x) \wzx(z|x) \\
		& = \qzc{\ell}(z) + \mun{\ell} K_{j,\ell}(z). \label{eq:29}
	\end{align}
	Let us now analyze the KL divergence between $\qhatn{j}$ and $\qbarn{j}$. 
	\begin{align}
		\!\!\!\!\delta_{n,j} \!& = \sum_{\zn} \qhatn{j}(\zn) \log \frac{\qhatn{j}(\zn)}{\qbarn{j}(\zn)} \\
		& =  -\avgH{\bzn|W_2 = j} - \sum_{\zn} \qhatn{j}(\zn) \log \qbarn{j}(\zn) \\ 
		& = \!\sum_{\ell = 1}^n \! \bracknorm{\! -\avgH{Z_\ell|\bzn_1^{\ell-1}\!, W_2 \!=\! j} \!-\! \sum_z \! \qhatj{\ell}(z) \!\log \qzc{\ell}(z) \!} \\
		& \geq \sum_{\ell = 1}^n \bracknorm{ \!-\avgH{Z_\ell|W_2 = j} - \sum_z \qhatj{\ell(z)} \log \qzc{\ell}(z)\!} \\ 
		& = \sum_{\ell = 1}^n \avgD{\qhatj{\ell}}{\qzc{\ell}} \label{eq:30}
	\end{align}
	Since $\limn \delta_{n,j} = 0$ and KL divergence terms are non-negative, we obtain $\limn \smash{\avgD{\qhatj{\ell}}{\qzc{\ell}} }= 0 $ for all $\ell \in \intseq{1}{n}$. 
	Using Pinsker's inequality, we obtain $\limn  \V{\qhatj{\ell}, \qzc{\ell}} = 0 $, which implies that $\forall z \in \calZ$ and $\forall \ell \in \intseq{1}{n}$, 
	\begin{align}
		\limn \abs{\qhatj{\ell}(z) - \qzc{\ell}(z)} = 0, \\
		\limn \mun{\ell} K(z) = 0. \label{eq:31}
	\end{align}
	However, there exists at least one $z \in \calZ$ such that $K(z) \neq 0$. Hence, we obtain $\limn \mun{\ell} = 0$.
	Next, define
	\begin{align}
		\Psij{\ell}(z) & \eqdef \mun{\ell} K_{j, \ell}(z), \label{eq:32} \\
		\xij{\ell}(z) & \eqdef \frac{\Psij{\ell}(z)}{\qzc{\ell}(z)} + \frac{4}{3} \frac{\abs{\Psij{\ell}(z)}}{\qzc{\ell}(z)}, \label{eq:33} \\
		\xi_j^{(n)}(z) & \eqdef \max_{\ell \in \intseq{1}{n}} \xij{\ell}(z) . \label{eq:34}
	\end{align}
	Since $\limn \mun{\ell} = 0$, we infer that $\limn \Psij{\ell}(z) = \limn \xij{\ell}(z) = 0$ for all $z \in \calZ$ and $\ell \in \intseq{1}{n}$. 
	Consequently, $\limn \xi_{j}^{(n)}(z) = 0$, $\forall z \in \calZ$. 
	Continuing the analysis of $\delta_{n,j}$ from~\eqref{eq:30}, we have, for $n$ large enough, 
	\begin{align}
		\delta_{n,j} & \geq \sum_{\ell = 1}^n \sum_z \qhatj{\ell}(z) \log \bracknorm{1 + \frac{\mun{\ell}K_{j, \ell}(z)}{\qzc{\ell}(z)}}  \\
		& \stackrel{(a)}{ \geq } \!\! \sum_{\ell = 1}^n \! \sum_z \!\frac{\bracknorm{\Psij{\ell}(z)}^2}{2 \qzc{\ell}(z)} \!\!\bracknorm{\!\! 1\! -\! \frac{\Psij{\ell}(z)}{\qzc{\ell}(z)} \!-\! \frac{4\abs{\Psij{\ell}(z)}}{3\qzc{\ell}(z)}\!} \displaybreak \\
		& = \sum_{\ell = 1}^n \sum_z \frac{\bracknorm{\Psij{\ell}(z)}^2}{2 \qzc{\ell}(z)} \bracknorm{1 - \xij{\ell}(z)} \\
		& \geq \sum_z \bracknorm{1 - \xi_{j}^{(n)}(z)} \sum_{\ell = 1}^n \frac{\bracknorm{\Psij{\ell}(z)}^2}{2\qzc{\ell}(z)}, \label{eq:35}
	\end{align}
	where $(a)$ follows from the fact that $\log(1+x) \geq x - \frac{x^2}{2}$, for $x \geq 0$, $\log(1+x) \geq x - \frac{x^2}{2} + \frac{2x^3}{3}$, for $x \in \bracksq{ -\frac{1}{2}, 0}$, and $\sum_z \Psij{\ell} = 0 $.  
	Then, we define
	\begin{align}
		\no & \eqdef  \textnormal{wt}\bracknorm{\xn_{0j}} , \quad \nz \eqdef n - \no,  \label{eq:37} \\
		\rhoz & \eqdef \frac{\sum_{\substack{\ell = 1 \\ \ell: x_{0j,\ell} = 0 }}^n \mun{\ell} }{\nz}, \quad \rhoo \eqdef \frac{\sum_{\substack{\ell = 1 \\ \ell: x_{0j,\ell} = 1 }}^n \mun{\ell} }{\no} \label{eq:39},  \\
		\gamma_{j}^{(n)} & \eqdef \frac{\rhoo}{\rhoz}, \quad \lamj \eqdef \frac{\no}{n},  \label{eq:41}
	\end{align}
	where $\nz$ and $\no$ denote the number of $0$'s and $1$'s in $\xn_{0j}$, respectively; $\rhoz$ is the average probability of flipping $0$ to $1$ and $\smash{\rhoo}$ is the average probability of flipping $1$ to $0$. 
	Note that $\smash{\limn \rhoz = \limn \rhoo = 0}$. 
	In addition, we set $\limn \gamma_j^{(n)} = \gamma_j^\dagger  \in \mathbb{R}^+ $.
	If $\limn \gamma_j^{(n)} = 0$ or $\infty$, only symbols in positions with innocent symbol $0$ or $1$, respectively, are used to embed covert information. 
	Else, the sequence $\smash{\{\gamma_j^{(n)}\}}$ is bounded, and we can extract a convergent subsequence with limit $\smash{\gamma_j^\dagger}$. 
	Note that $1 - \lamj = \smash{\frac{\nz}{n}}$. 
	Using Cauchy-Schwarz inequality, we obtain
	\begin{align}
		\sum_{\substack{\ell = 1 \\ \ell: x_{0j,\ell} = 0}}^n \bracknorm{\mun{\ell}}^2 \geq \frac{1}{\nz} \bracknorm{\sum_{\substack{\ell = 1 \\ \ell: x_{0j,\ell} = 0}}^n  \mun{\ell}}^2, \label{eq:42} \\
		\sum_{\substack{\ell = 1 \\ \ell: x_{0j,\ell} = 1}}^n \bracknorm{\mun{\ell}}^2 \geq \frac{1}{\no} \bracknorm{\sum_{\substack{\ell = 1 \\ \ell: x_{0j,\ell} = 1}}^n  \mun{\ell}}^2. \label{eq:43}		
	\end{align}
	From~\eqref{eq:35}, we continue to bound $\delta_{n,j}$ by
	\begin{align}
		\delta_{n,j} & \geq \sum_z \frac{1}{2} \bracknorm{1 - \xi_{j}^{(n)}(z)} \bracknorm{\sum_{\substack{\ell =1 \\ \ell: x_{0j,\ell}=0}}^n \bracknorm{\mun{\ell}}^2 \frac{K^2(z)}{\qz(z)} + \sum_{\substack{\ell =1 \\ \ell: x_{0j,\ell}=1}}^n \bracknorm{\mun{\ell}}^2 \frac{K^2(z)}{\qo(z)}  } \\
		& \stackrel{(a)}{\geq} \sum_z \frac{1}{2} \bracknorm{1 - \xi_{j}^{(n)}(z)} \bracknorm{ \frac{1}{\nz} \bracknorm{ \sum_{\substack{\ell =1 \\ \ell: x_{0j,\ell}=0}}^n \mun{\ell} }^2 \frac{K^2(z)}{\qz(z)} + \frac{1}{\no} \bracknorm{ \sum_{\substack{\ell =1 \\ \ell: x_{0j,\ell}=1}}^n \mun{\ell} }^2 \frac{K^2(z)}{\qo(z)}  } \displaybreak \\
		& = \sum_z \frac{1}{2} \bracknorm{1 - \xi_{j}^{(n)}(z)} \bracknorm{ \nz \bracknorm{ \rhoz }^2 \frac{K^2(z)}{\qz(z)} + \no \bracknorm{ \rhoo }^2 \frac{K^2(z)}{\qo(z)}  } \\
		& = \sum_z \frac{1}{2} n \bracknorm{\rhoz}^2 \bracknorm{1 - \xi_{j}^{(n)}(z)} \bracknorm{ (1 - \lamj ) \frac{K^2(z)}{\qz(z)} +  \lamj \bracknorm{ \gamma_j^{(n)} }^2 \frac{K^2(z)}{\qo(z)}  }, \label{eq:44}
	\end{align} 
	where $(a)$ follows from~\eqref{eq:42} and~\eqref{eq:43}. 
	
	We pause the analysis of $\delta_n,j$ here and define constant composition sub-codebooks $\calF_k \subset \calC^*$ with type $P_k$ in which $k$ denotes the weight of the codeword. 
	As there are $(n+1)$ different types for sequences $\brackcurl{0,1}^n$, there are at most $\bracknorm{n+1}$ such sub-codebooks.
	Let us recall that the codebook $\calC^*$ is a capacity-achieving codebook for the channel between Alice and Willie, that is, the rate of the common message $R \eqdef \frac{\log M_2}{n} = \mathbb{I}\bracknorm{\Lambda, \wzx} - \delta(n) $ with $\limn \delta(n) = 0$. 
	Let us assume that $\forall k \in \intseq{1}{n}, \frac{\log \abs{\calF_k}}{n} \leq R - \delta $ for any $\delta > 0$. 
	For a $\delta' < \delta$ and $n$ large enough, 
	\begin{align}
		M_2 & = \sum_k \abs{\calF_k} \\
		& \leq \sum_k \exp \bracknorm{n \bracknorm{R-\delta}} \\
		& < \bracknorm{n+1} \exp \bracknorm{n \bracknorm{R-\delta}} \\
		& < \exp \bracknorm{n \bracknorm{R-\delta'}}. \label{eq:58}
	\end{align}
	We note that the assumption $\frac{\log \abs{\calF_k}}{n} \leq R - \delta$ for all $k \in \intseq{1}{n}$ results in a contradiction in~\eqref{eq:58} since  $M_2 = \exp \bracknorm{nR}$. Hence, there exists at least one sub-codebook $\calF_{k^*} $ such that 
	\begin{align}
		\frac{\log \abs{\calF_{k^*}}}{n} > R - \delta \label{eq:59}	
	\end{align}
	and $P_{k^*}(1) = 1 - P_{k^*}(0) = \frac{k^*}{n} \in \bracknorm{0,1}$. 
	Using \cite[Corollary~6.4]{Csiszar2011}, we bound the rate of this sub-codebook for an arbitrary $\upsilon > 0$ by
	\begin{align}
		\frac{\log \abs{\calF_{k^*}}}{n}  < \mathbb{I}\bracknorm{P_{k^*}, \wzx{}} + 2\upsilon. \label{eq:60}
	\end{align}
	Combining~\eqref{eq:59} and~\eqref{eq:60}, we obtain
	\begin{align}
		\mathbb{I}\bracknorm{P_{k^*}, \wzx} > \mathbb{I}\bracknorm{\Lambda, \wzx} - \delta - 2 \upsilon. \label{eq:61}
	\end{align}
	Using the unicity of the capacity-achieving input distribution, the concavity of mutual information and~\eqref{eq:61}, we conclude that the type $P_{k^*}$ is arbitrarily close to $\Lambda$ since $\delta$ and $\upsilon$ are arbitrary.
	Consequently, we replace $\lamj$ with $\lambda^{\dagger} \eqdef \lambda^* - \epsilon$ for an arbitrarily small $\epsilon \in \mathbb{R}$. 

	We then bound $\log M_1$ using standard converse steps. 
	\begin{align}
		\!\!\!\!\!\log M_1 & = \avgH{W_1|W_2 = j} \\
		& \leq \avgI{W_1; \byn|W_2 = j} + \Hb{\epsilon_{n,j}} + \epsilon_{n,j} \log M_1 \\
		& \leq \avgI{W_1 \bxn; \byn|W_2 \! = \! j} + \Hb{\epsilon_{n,j}} + \epsilon_{n,j} \log M_1 \\
		& = \avgI{\bxn; \byn|W_2 = j} + \avgI{W_1; \byn | W_2 =j, \bxn} + \Hb{\epsilon_{n,j}} + \epsilon_{n,j} \log M_1 \\
		& \stackrel{(a)}{=} \avgH{\byn| W_2 = j} - \avgH{\byn|\bxn, W_2 = j}  + \Hb{\epsilon_{n,j}} + \epsilon_{n,j} \log M_1 \\
		& \stackrel{(b)}{\leq} \sum_{\ell = 1}^n \bracknorm{ \avgH{Y_\ell| W_2 = j} - \avgH{Y_\ell| X_\ell , W_2 = j} }  + \Hb{\epsilon_{n,j}} + \epsilon_{n,j} \log M_1 \\
		& = \! \sum_{\ell = 1}^n \avgI{X_\ell; Y_\ell | W_2 =j} \!+\! \Hb{\epsilon_{n,j}} \!+\! \epsilon_{n,j} \log M_1, \label{eq:45}
	\end{align}
	where $(a)$ follows from the fact that $\avgI{W_1; \byn | W_2 =j, \bxn} = 0$, and $(b)$ follows from the fact that conditioning reduces entropy and the memoryless property of the channel $\wyx$. 
	Rearranging the terms in~\eqref{eq:45}, we obtain
	\begin{align}
		\log M_1 \leq \frac{\sum_{\ell=1}^n \avgI{X_\ell; Y_\ell | W_2 = j}  + \Hb{\epsilon_{n,j}}}{1 - \epsilon_{n,j}}. \label{eq:46}
	\end{align}
	Defining $\smash{\phatj{\ell}}$ as the distribution of symbol $Y_\ell$ of $\byn$, we upper bound the mutual information term in~\eqref{eq:46} by
	\begin{align}
		\avgI{X_\ell; Y_\ell |W_2 = j} \!\!	& = \sum_y \bracknorm{ \bracknorm{1 - \mun{\ell}} \pzc{\ell}(y) \log\frac{\pzc{\ell}(y)}{\phatj{\ell}(y)} }  +  \sum_y \bracknorm{ \bracknorm{\mun{\ell}} \poc{\ell}(y) \log\frac{\poc{\ell}(y)}{\phatj{\ell}(y)} } \\
		& =\mun{\ell}\avgD{\poc{\ell}}{\pzc{\ell}}  - \avgD{\phatj{\ell}}{\pzc{\ell}}  \label{eq:46_1} \\
		& \leq \mun{\ell} \avgD{\poc{\ell}}{\pzc{\ell}}. \label{eq:47}
	\end{align}
	Combining~\eqref{eq:46} and~\eqref{eq:47}, we obtain
	\begin{align}
		\log M_1 & \leq \frac{\sum_{\ell=1}^n  \mun{\ell} \avgD{\poc{\ell}}{\pzc{\ell}}  + \Hb{\epsilon_{n,j}}}{1 - \epsilon_{n,j}} \\
		& =\! \frac{ \nz \rhoz \avgD{\po}{\pz} \!+\! \no \rhoo \avgD{\pz}{\po} \!+\! \Hb{\epsilon_{n,j}}}{1 - \epsilon_{n,j}} \\
		& = \! \frac{n \rhoz\! \bracknorm{ \!(1 \!-\! \lambda^\dagger) \avgD{\po}{\pz} \!+\! \lambda^\dagger \gamma_j^{(n)} \avgD{\pz}{\po}\!} \!+\! \Hb{\epsilon_{n,j}\!}}{1 - \epsilon_{n,j}}.  \label{eq:48}
	\end{align}
	Since $\limn \log M_1 = \infty$,~\eqref{eq:48} imposes that $\limn n \rhoz = \infty$. Consequently, from~\eqref{eq:44}, we conclude that $\limn\sqrt{n \delta_{n,j}} = \infty$. 
	Combining~\eqref{eq:44},~\eqref{eq:48}, and the facts that $\limn\sqrt{n \delta_{n,j}} = \infty$ and $\limn \Hb{\epsilon_{n,j}} = 0$, we obtain
	\begin{align}
		\limn \frac{\log M_1}{\sqrt{n \delta_{n,j}}} 	& \leq \limn \frac{ n \rhoz \bracknorm{ (1 - \lambda^\dagger) \avgD{\po}{\pz} + \lambda^\dagger \gamma_j^{(n)} \avgD{\pz}{\po}}}{\bracknorm{1 - \epsilon_{n,j}}\sqrt{n\sum_z \frac{n\bracknorm{\rhoz}^2}{2}\bracknorm{ 1 - \xi_j^{(n)}(z) } \bracknorm{ (1-\lambda^\dagger )  \frac{K^2(z)}{Q_0(z)} + \lambda^\dagger \bracknorm{\gamma_j^{(n)}}^2 \frac{K^2(z)}{Q_1(z)}}}} \\
		& = \sqrt{2} \frac{(1 - \lambda^\dagger) \avgD{\po}{\pz} + \lambda^\dagger \gamma_j^\dagger \avgD{\pz}{\po}}{\sqrt{\bracknorm{1-\lambda^\dagger}\chisquare{Q_1}{Q_0} + \lambda^\dagger \bracknorm{\gamma_j^\dagger}^2 \chisquare{Q_0}{Q_1}}}.  \label{eq:49}
	\end{align} 
  
	Following standard steps, we lower bound $\log M_1$ by
	\begin{align}
		\log M_1 & = \avgH{W_1|W_2 =j} \\
		& \geq \avgI{W_1; \bzn | W_2 =j} \\
		& \stackrel{(a)}{=} \avgH{\bzn|W_2 =j} - \avgH{\bzn|\bxn, W_1, W_2 =j} \\
		& \stackrel{(b)}{\geq} \avgH{\bzn|W_2 =j} - \avgH{\bzn|\bxn, W_2 =j} \\ 
		& = \sum_{\xn} \sum_{\zn} \Pi_{j}^n(\xn) \wzxn(\zn|\xn) \log \frac{\wzxn(\zn|\xn)}{\qhatn{j}(\zn)}  \\ 
		& = \!\! \sum_{\ell=1}^n \! \sum_x\! \sum_z \! \Pi_{j,\ell}(x) \wzx(z|x) \log \!\frac{\wzx(z|x)}{\qzc{\ell}(z)}\! - \!\delta_{n,j} \\
		& \stackrel{(c)}{\geq} \sum_{\ell=1}^n \avgI{X_\ell; Z_\ell | W_2 =j} - \delta_{n,j} , \label{eq:50}
	\end{align}
	where $(a)$ follows from the fact that $\bxn$ is a function of $(W_1, W_2)$, $(b)$ follows from the fact that conditioning reduces entropy, and $(c) $ follows from the fact that $\sum_x \Pi_{j,\ell}(x) \wzx(z|x) = \qhatj{\ell}(z)$ and from the fact that KL divergence is non-negative. 
	Continuing the analysis of $\log M_1$ by expanding the mutual information term, we obtain
	\begin{align}
		\!\log M_1 \! & \geq \sum_{\ell=1}^n  \sum_z  \bracknorm{1 - \mun{\ell}} \qzc{\ell}(z) \log\frac{\qzc{\ell}(z)}{\qhatj{\ell}(z)} + \sum_{\ell=1}^n \sum_z  \bracknorm{\mun{\ell}} \qoc{\ell}(z) \log\frac{\qoc{\ell}(z)}{\qhatj{\ell}(z)}   - \delta_{n,j} \\
		& \geq \!  \sum_{\ell=1}^n \mun{\ell}\avgD{\qoc{\ell}}{\qzc{\ell}}  \!-\! \sum_{\ell=1}^n \avgD{\qhatj{\ell}}{\qzc{\ell}}  \!-\! \delta_{n,j} \\
		& \stackrel{(a)}{\geq} \sum_{\ell=1}^n \mun{\ell}\avgD{\qoc{\ell}}{\qzc{\ell}}  - 2 \delta_{n,j} \\
		& = \! n \rhoz\!\! \bracknorm{ \!(1 \!-\! \lambda^\dagger) \avgD{\qo}{\qz} \!+\! \lambda^\dagger\! \gamma_j^{(n)} \avgD{\qz}{\qo}\!} \!\!-\! 2 \delta_{n,j}, \label{eq:51}
	\end{align}
	where $(a)$ follows from~\eqref{eq:30}. 
	For an arbitrary $\epsilon \in (0,1)$, an  $n$ large enough, and for any sequence of codes such that~\eqref{eq:49} is satisfied with equality,\footnote{We know that there exists at least one such code from Theorem~\ref{thm:ach}.} we have
	\begin{multline}
		\sqrt{2} \frac{(1 - \epsilon ) \bracknorm{ (1 - \lambda^\dagger) \avgD{\po}{\pz} + \lambda^\dagger \gamma_j^\dagger \avgD{\pz}{\po}}}{\sqrt{\bracknorm{1-\lambda^\dagger}\chisquare{Q_1}{Q_0} + \lambda^\dagger \bracknorm{\gamma_j^\dagger}^2 \chisquare{Q_0}{Q_1}}} \\  \leq \frac{ n \rhoz \bracknorm{ (1 - \lambda^\dagger) \avgD{\po}{\pz} + \lambda^\dagger \gamma_j^{(n)} \avgD{\pz}{\po}}+\Hb{\epsilon_{n,j}} }{\bracknorm{1 - \epsilon_{n,j}} \sqrt{n \delta_{n,j}} }. \label{eq:54}
	\end{multline}
	Combining~\eqref{eq:51}, and~\eqref{eq:54}, we obtain
	\begin{align}
		\frac{\log M_1}{\sqrt{n \delta_{n,j}}}  & \!\geq\! \frac{ n\rhoz\! \bracknorm{\!\bracknorm{1 - \lambda^\dagger} \!\avgD{\qo}{\qz} \!+\! \lambda^\dagger \!\gamma_j^{(n)} \avgD{\qz}{\qo} \!} }{ \sqrt{\!\bracknorm{1\!-\!\lambda^\dagger}\!\chisquare{Q_1}{Q_0} \!+\! \lambda^\dagger \!\bracknorm{\!\gamma_j^\dagger\!}^2\!\!\! \chisquare{Q_0}{Q_1}}} \nex
		& \dupspace \times \frac{\sqrt{2}\! \bracknorm{1 \!-\! \epsilon_{n,j}}\! \bracknorm{1 - \epsilon} \!\bracknorm{ \!(1 \!-\! \lambda^\dagger) \avgD{\po}{\pz} \!+\! \lambda^\dagger \!\gamma_j^\dagger \avgD{\pz}{\po}\!}}{ n \rhoz\! \bracknorm{\! (1 \!-\! \lambda^\dagger) \avgD{\po}{\pz} \!+\! \lambda^\dagger  \gamma_j^{(n)} \avgD{\pz}{\po}\!}\!+\!\Hb{\epsilon_{n,j}\!} } \!-\! \frac{2\delta_{n,j}}{\sqrt{n \delta_{n,j}}}. \label{eq:55}
	\end{align} 
	Since $\limn \delta_{n,j} = 0$, the last term in ~\eqref{eq:55} vanishes in the limit. 
		Since $\epsilon$ is arbitrary, on applying limits to~\eqref{eq:55}, we obtain
	\begin{align}
		\!\limn\! \frac{\log M_1}{\sqrt{n \delta_{n,j}}} \!\geq\! \frac{ \sqrt{2} \bracknorm{\!\bracknorm{1 - \lambda^\dagger} \!\avgD{\qo}{\qz} \!+\! \lambda^\dagger \gamma_j^\dagger \avgD{\qz}{\qo} \!} }{ \sqrt{\!\bracknorm{1\!-\!\lambda^\dagger}\!\chisquare{Q_1}{Q_0} \!+\! \lambda^\dagger \!\bracknorm{\!\gamma_j^\dagger\!}^2\!\!\! \chisquare{Q_0}{Q_1}}} . \label{eq:57}
	\end{align}
	Note that the bounds~\eqref{eq:49} and~\eqref{eq:57} still depend on the choice of the common message $W_2 = j$ through $\gamma_j^{\dagger}$. 
	To eliminate this dependency, we choose an optimal $\gamma^* \geq 0$ that maximizes~\eqref{eq:49} provided the following condition is satisfied. 
	\begin{align}
		(1 - \lambda^\dagger) \avgD{\po}{\pz} + \lambda^\dagger \gamma^* \avgD{\pz}{\po} \geq (1 - \lambda^\dagger) \avgD{\qo}{\qz} + \lambda^\dagger \gamma^* \avgD{\qz}{\qo}.  \label{eq:64}
	\end{align}
	Consequently, replacing $\gamma_j^\dagger$ with $\gamma^*$ in~\eqref{eq:49} and~\eqref{eq:57}, we obtain~\eqref{eq:con1} and~\eqref{eq:con2} since $\epsilon$ in the definition of $\lambda^\dagger$ is arbitrary. 
%
\end{proof}

The combination of~\eqref{eq:con1} and~\eqref{eq:con2} imposes
\begin{align}
	\bracknorm{1 - \lambda^*} \avgD{\po}{\pz} + \lambda^* \gamma \avgD{\pz}{\po} >	\bracknorm{1 - \lambda^*} \avgD{\qo}{\qz} + \lambda^* \gamma \avgD{\qz}{\qo}, \label{eq:18}
\end{align}
which characterizes the advantage that Bob should possess over Willie to facilitate keyless embedding of covert bits. 
Although we normalize $\log M_1$ by $\sqrt{n \avgD{\qhatn{j}}{\qbarn{j}}}$, which depends on the choice of the common message $W_2 = j$, the bounds on $\log M_1$ are independent of $j$. 
\begin{remark}
	As a special case, let us assume that the channel to Willie is degraded \ac{wrt} the channel to Bob. 
	This assumption guarantees that~\eqref{eq:18} is satisfied as degradedness implies $\avgD{\po}{\pz} > \avgD{\qo}{\qz}$ and $\avgD{\pz}{\po} > \avgD{\qz}{\qo}$. 
\end{remark}
However, the degraded broadcast channel assumption is not a necessary condition to facilitate keyless covert communication in our case as shown in the following example. 
\begin{example}
Let us consider a discrete memoryless channel $\bracknorm{\calX, \wzx, \calZ}$	with $\avgD{\po}{\pz} < \avgD{\qo}{\qz}$ and $\avgD{\pz}{\po} > \avgD{\qz}{\qo}$.
Here, the channel $\wzx$ is not degraded \ac{wrt} the channel $\wyx$. Since all KL divergence terms and $\lambda^*$ in~\eqref{eq:18} are constants determined by the channel, the only degree of freedom is $\gamma$. 
\end{example}
\begin{remark}
	For symmetric channels with two inputs, note that $\avgD{\po}{\pz} = \avgD{\pz}{\po}$, $\avgD{Q_1}{Q_0}=\avgD{Q_0}{Q_1}$, and $\chisquare{Q_1}{Q_0} = \chisquare{Q_0}{Q_1}$. 
	Consequently, from~\eqref{eq:con1} and~\eqref{eq:con2}, we obtain
	\begin{align}
		\!\limn\! \frac{\log M_1}{\!\!\sqrt{n \avgD{\qhatn{j}}{\qbarn{j}}}} & \!\leq\! \sqrt{2} \frac{\avgD{\po}{\pz}}{\sqrt{\chisquare{\qo}{\qz}}}, \label{eq:132} \\
		\!\limn\! \frac{\log M_1}{\!\!\sqrt{n \avgD{\qhatn{j}}{\qbarn{j}}}} & \!\geq\! \sqrt{2} \frac{\avgD{\qo}{\qz}}{\sqrt{\chisquare{\qo}{\qz}}}, \label{eq:133}
	\end{align}
	since $\gamma^*=1$ and $\lambda^* = \frac{1}{2}$. 
	Note that the covert throughput in~\eqref{eq:132}  matches that of the point-to-point channel~\cite{Bloch2016}. 
	As a special case, we consider a broadcast setup for \acp{BSC} with $p_B$ and $p_W$ as the crossover probabilities for the channels from Alice to Bob and Willie, respectively. 
	Assuming $p_B \leq 0.5$ and $p_W \leq 0.5$ without loss of generality, we obtain  
	\begin{align}
		& \avgD{\po}{\pz} = \avgD{\pz}{\po} = \bracknorm{1 - 2 p_B} \log \bracknorm{ \frac{1-p_B}{p_B} }, \label{eq:19} \\
		& \avgD{Q_1}{Q_0} \!=\! \avgD{Q_0}{Q_1} \!=\! \bracknorm{1 \!-\! 2 p_W} \log \bracknorm{ \frac{1\!-\!p_W}{p_W} }, \label{eq:20} \\
		& \chisquare{Q_1}{Q_0} = \chisquare{Q_0}{Q_1} = \frac{\bracknorm{1 - 2p_W}^2}{p_W \bracknorm{1-p_W}}. \label{eq:22}
	\end{align}
	Combining~\eqref{eq:132} to \eqref{eq:22}, we obtain 
	\begin{align}
		\!\limn\! \frac{\log M_1}{\!\!\sqrt{n \avgD{\qhatn{j}}{\qbarn{j}}}} & \!\leq\! \!\sqrt{2 p_W\!\bracknorm{1-p_W}} \frac{1 \!-\! 2 p_B}{1 \!-\! 2 p_W} \!\log\! \bracknorm{\! \frac{1\!-\!p_B}{p_B} \!},\label{eq:23}  \\
		\!\limn\! \frac{\log M_1}{\!\!\sqrt{n \avgD{\qhatn{j}}{\qbarn{j}}}} & \!\geq\!  \sqrt{2 p_W \bracknorm{1-p_W}} \log \bracknorm{ \frac{1-p_W}{p_W} }. \label{eq:24}
	\end{align}
	Note that keyless covert communication is achievable in this channel model iff
	\begin{align}
		\bracknorm{1 \!-\! 2 p_B} \log \bracknorm{\! \frac{1\!-\!p_B}{p_B} \!} & \!\geq\! \bracknorm{1 \!-\! 2 p_W} \log \bracknorm{\! \frac{1\!-\!p_W}{p_W} \!}. \label{eq:25}
	\end{align}
\end{remark}

\appendices
\section{Proof of Lemma~\ref{lem:covertprocess}} \label{sec:covprf}
	Since $\qbarcovn{\xbarn}{\alpha_n}{\beta_n}$ is an $n$-fold distribution, we write $\qbarcovn{\xbarn}{\alpha_n}{\beta_n} = \prod_{i=1}^n \qbarcov{\xbar_i}{\alpha_n}{\beta_n} $. 
	We now analyze the KL divergence between $\qbarcovn{\xbarn}{\alpha_n}{\beta_n}$ and $\qbarcovn{\xbarn}{0}{0}$. 
	\begin{align}
		\avgD{\qbarcovn{\xbarn}{\alpha_n}{\beta_n}}{\qbarcovn{\xbarn}{0}{0}} & =\sum_{i=1}^n \avgD{\qbarcov{\xbar_i}{\alpha_n}{\beta_n}}{W_{Z|X = \xbar_i}} \\
		& = n \bracknorm{1 - \lambda_n} \avgD{\qbarcov{0}{\alpha_n}{\beta_n}}{\qz} +  n \lambda_n \avgD{\qbarcov{1}{\alpha_n}{\beta_n}}{\qo}. \label{eq:11}
	\end{align}
	For $k \in \mathbb{N}^*$ and two distributions defined on the same alphabet $\calZ$, we define $\chi_k\bracknorm{P\|Q} \eqdef \sum_z \frac{\bracknorm{P(z) - Q(z)}^k}{Q^{k-1}(z)}$ and $\eta_k\bracknorm{P\|Q} \eqdef \sum_{z: P(z) - Q(z)<0} \frac{\bracknorm{P(z) - Q(z)}^k}{Q^{k-1}(z)}$.
	Then, using~\cite[Lemma 1]{Bloch2016}, we upper bound each of the two KL divergence terms in~\eqref{eq:11} by
	\begin{align}
		\avgD{\qbarcov{0}{\alpha_n}{\beta_n}}{\qz} & \leq \frac{\alpha_n^2}{2} \chisquare{\qo}{\qz} - \frac{\alpha_n^3}{6} \chi_3\bracknorm{\qo \|\qz} + \frac{\alpha_n^4}{3} \chi_4\bracknorm{\qo \|\qz}, \label{eq:12} \\
		\avgD{\qbarcov{1}{\alpha_n}{\beta_n}}{\qo} & \leq \frac{\beta_n^2}{2} \chisquare{\qz}{\qo} - \frac{\beta_n^3}{6} \chi_3\bracknorm{\qz \|\qo} + \frac{\beta_n^4}{3} \chi_4\bracknorm{\qz \|\qo}, \label{eq:13}
	\end{align}
	For $n$ large enough, using~\cite[Lemma 1]{Bloch2016}, we lower bound the two KL divergence terms in~\eqref{eq:11} by
	\begin{align}
		\avgD{\qbarcov{0}{\alpha_n}{\beta_n}}{\qz} & \geq \frac{\alpha_n^2}{2} \chisquare{\qo}{\qz} - \alpha_n^3 \bracknorm{\frac{1}{2}\chi_3\bracknorm{\qo \|\qz} - \frac{2}{3} \eta_3\bracknorm{\qo \| \qz} } + \frac{2\alpha_n^4}{3} \eta_4\bracknorm{\qo \|\qz}, \label{eq:14} \\
		\avgD{\qbarcov{1}{\alpha_n}{\beta_n}}{\qo} & \geq \frac{\beta_n^2}{2} \chisquare{\qz}{\qo} - \beta_n^3 \bracknorm{\frac{1}{2}\chi_3\bracknorm{\qz \|\qo} - \frac{2}{3} \eta_3\bracknorm{\qz \| \qo} } + \frac{2\beta_n^4}{3} \eta_4\bracknorm{\qz \|\qo}. \label{eq:15}
	\end{align}
	Loosening the bounds in~\eqref{eq:12}-\eqref{eq:15}, for $n$ large enough, we obtain
	\begin{align}
		\frac{\alpha_n^2}{2}\bracknorm{1 + \sqrt{\alpha_n}} \chisquare{\qo}{\qz} & \geq \avgD{\qbarcov{0}{\alpha_n}{\beta_n}}{\qz} \geq \frac{\alpha_n^2}{2}\bracknorm{1 - \sqrt{\alpha_n}} \chisquare{\qo}{\qz}, \label{eq:16} \\
		\frac{\beta_n^2}{2}\bracknorm{1 + \sqrt{\beta_n}} \chisquare{\qz}{\qo} & \geq \avgD{\qbarcov{1}{\alpha_n}{\beta_n}}{\qo} \geq \frac{\beta_n^2}{2}\bracknorm{1 - \sqrt{\beta_n}} \chisquare{\qz}{\qo}. \label{eq:17}
	\end{align}
	Ultimately, combining~\eqref{eq:11},~\eqref{eq:16}, and~\eqref{eq:17}, we obtain~\eqref{eq:cov1}. 

\section{Proof of Lemma~\ref{lem:rel}} \label{sec:relprf}
	We denote the covert transmission status of Alice by $T \eqdef 1 - \indic{W_1 = 0}$ and Bob's estimate of $T$ by $\widehat{T}$. 
	For $j \in \intseq{1}{M_2}$ and $\xn_{0j} \in \calC_2$, define
	\begin{align}
		\!\!\! \agammaj \eqdef \brackcurl{\bracknorm{\xn, \yn} \in \calX^n \times \calY^n : \log \frac{\wyxn\bracknorm{\yn, \xn}}{\pbarcovn{\xn_{0j}}{\alpha_n}{\beta_n}(\yn)} \geq \gamma_j }, \label{eq:72}
	\end{align}
	where $\gamma_j > 0$ will be determined later. 
	The decoder at Bob operates as follows
	\begin{itemize}
		\item if $\exists$ unique $i$ such that $\bracknorm{\xn_{ij}, \yn} \in \agammaj$, output $\widehat{W}_1 = i$,
		\item else if $\not\exists$ $i$ such that $\bracknorm{\xn_{ij}, \yn} \in \agammaj$, output $\widehat{W}_1 = 0$, 
		\item else, declare a decoding error. 
	\end{itemize}
		Define the event $E_{ij} \eqdef \smash{\brackcurl{\bracknorm{\bxn_{ij}, \byn} \in \agammaj} }$. 
		We also define
		\begin{align}
			\!\!\!\!E_1 & \!\eqdef\! \mathbb{E}_{\calC}\!\!\bracknorm{\! \frac{1}{M_2}\! \sum_{j=1}^{M_2} \!\sum_{t' \in \brackcurl{0,1}}\!\!\!\! \P{\!\widehat{T}\!\neq\! t'\middle|t \!=\! t', \widehat{W}_2 \!=\! W_2 \!=\! j }\!\!}, \\
			\!\!\!\!\!\!\!\!\!E_2 & \!\eqdef\! \mathbb{E}_{\calC}\!\!\bracknorm{\frac{1}{M_2}\! \sum_{j=1}^{M_2}\P{\! \widehat{W}_1 \!\neq\! W_1 \middle| \widehat{W}_2 \!=\! W_2 \!=\! j, \widehat{T}\!=\!t\!=\!1}}. 
		\end{align}
		The error probability of the covert message averaged over all choices of the  codebook $\calC$ can be written as
	\begin{align}
		\!\mathbb{E}_{\calC}\!\!\bracknorm{\frac{1}{M_2}\! \sum_{j=1}^{M_2}\!P_{e,2,j}^{\bracknorm{1}}\!} & =  E_1 + E_2. \label{eq:101}
	\end{align}
	From the definition of $E_1$, we obtain
	\begin{align}
		E_1 &\!=\! \E{\calC}{\frac{1}{M_2} \sum_{j=1}^{M_2}\P{\widehat{T} \!=\! 0 \middle|t\!=\!1, \widehat{W}_2 \!=\! W_2\!=\! j }\!}  \!\!\!+\! \E{\calC}{\!\frac{1}{M_2} \!\sum_{j=1}^{M_2}\P{\!\widehat{T} \!=\! 1 \middle|t\!=\!0, \widehat{W}_2 \!=\!W_2 \!=\! j }\!}. \label{eq:102}
	\end{align}
	We upper bound the first term in~\eqref{eq:102} by
	\begin{align}
		\E{\calC}{\frac{1}{M_2} \sum_{j=1}^{M_2}\P{\widehat{T} \!=\! 0 \middle|t\!=\!1, \widehat{W}_2 \!=\! W_2\!=\! j }\!} & = \E{\calC}{ \!\!\frac{1}{M_1M_2}\! \sum_{j=1}^{M_2} \!\sum_{i=1}^{M_1} \!\sum_\yn \! \wyxn\!\bracknorm{\yn|\bxn_{ij}} \!\indic{\bigcap_{i'} E_{i'j}^c \!} \!}   \\
		& \stackrel{(a)}{\leq} \E{\calC}{ \frac{1}{M_1M_2} \sum_{j=1}^{M_2} \sum_{i=1}^{M_1} \sum_\yn \wyxn\bracknorm{\yn|\bxn_{ij}} \indic{E_{ij}^c } } \label{eq:103_1} \\
		& = \frac{1}{M_2} \sum_{j=1}^{M_2} \sum_{\xn_{0j}} P_X^n\bracknorm{\xn_{0j}} \mathbb{P}_{\wyxn\pibar{\xn_{0j}}{\alpha_n}{\beta_n}}\bracknorm{\bracknorm{\agammaj}^c}, \label{eq:103}
	\end{align}
	where $(a)$ follows from the fact that the probability of intersection of several events does not exceed the probability of one of those events. We bound the second term in~\eqref{eq:102} by
	\begin{align}
		 \!\!\!\!&  \E{\calC}{\!\frac{1}{M_2} \!\sum_{j=1}^{M_2}\P{\!\widehat{T} \!=\! 1 \middle|t\!=\!0, \widehat{W}_2 \!=\!W_2 \!=\! j }\!} \nex 
		 & = \E{\calC}{\frac{1}{M_2} \!\sum_{j=1}^{M_2}\sum_\yn \wyxn\bracknorm{\yn|\bxn_{0j}} \indic{\bigcup_{i} E_{ij} }} \\ 
		 & \stackrel{(a)}{\leq} \! \frac{1}{M_2} \!\sum_{j=1}^{M_2} \! \sum_{\xn_{0j}} P_X^n\bracknorm{\xn_{0j}} \sum_{i=1}^{M_1}\! \sum_\yn\! \sum_{\xn_{ij}} \!\wyxn\!\!\bracknorm{\yn|\xn_{0j}\!} \!  \pibar{\xn_{0j}}{\alpha_n}{\beta_n}\!\!\bracknorm{\xn_{ij}} \!\indic{\!\bracknorm{\xn_{ij}, \yn}\!\! \in\! \agammaj\!} \\
		 & \leq \frac{1}{M_2} \!\sum_{j=1}^{M_2} \! \sum_{\xn_{0j}} P_X^n\bracknorm{\xn_{0j}}  M_1 e^{-\gamma_j}  \!\! \sum_\yn\! \sum_{\xn_{1j}} \!\frac{\wyxn\!\bracknorm{\yn|\xn_{1j}}}{\pbarcovn{\xn_{0j}}{\alpha_n}{\beta_n}\!\bracknorm{\yn}}  \wyxn\bracknorm{\yn|\xn_{0j}} \pibar{\xn_{0j}}{\alpha_n}{\beta_n}\bracknorm{\xn_{1j}} \\
		 & \stackrel{(b)}{\leq} \frac{1}{M_2} \!\sum_{j=1}^{M_2} M_1 e^{-\gamma_j},  \label{eq:104}
	\end{align}
	where $(a)$ follows from the union bound and $(b)$ follows from the fact that $\sum_{\xn_{1j}} \wyxn\bracknorm{\yn|\xn_{1j}} \pibar{\xn_{0j}}{\alpha_n}{\beta_n}\bracknorm{\xn_{1j}} = \pbarcovn{\xn_{0j}}{\alpha_n}{\beta_n}\bracknorm{\yn}$ and the definition of $\agammaj$. 
	We then bound the second term in~\eqref{eq:101} by
	\begin{align} 
		& \!\!\!\!E_2 \stackrel{(a)}{\leq} \E{\calC}{ \frac{1}{M_1M_2} \sum_{j=1}^{M_2}\sum_{i=1}^{M_1} \sum_\yn \wyxn\bracknorm{\yn|\bxn_{ij}} \indic{E_{ij}^c} } \nex
		& \dupspace \! +\! \mathbb{E}_{\calC}\!\!\bracknorm{\!\! \frac{1}{M_1M_2} \sum_{j=1}^{M_2}\!\sum_{i=1}^{M_1} \!\sum_{\substack{i'=1 \\ i'\neq i}}^{M_1}\!\sum_\yn \!\wyxn\!\bracknorm{\yn|\bxn_{ij}}\! \indic{E_{i'j} } \!\!}, \label{eq:106}
	\end{align}
	where $(a)$ follows from the union bound. 
	We upper bound the second term in~\eqref{eq:106} by
	\begin{align}
		& \mathbb{E}_{\calC}\!\!\bracknorm{\!\! \frac{1}{M_1M_2} \sum_{j=1}^{M_2}\!\sum_{i=1}^{M_1} \!\sum_{\substack{i'=1 \\ i'\neq i}}^{M_1}\!\sum_\yn \!\wyxn\!\bracknorm{\yn|\bxn_{ij}}\! \indic{E_{i'j} } \!\!} \nex
		& \leq\! \frac{M_1}{M_2} \sum_{j=1}^{M_2} \sum_{\xn_{0j}} P_X^n\bracknorm{\xn_{0j}} \!\sum_\yn\! \sum_{\xn_{1j}} \pbarcovn{\xn_{0j}}{\alpha_n}{\beta_n}\!\!\bracknorm{\yn}  \!\pibar{\xn_{0j}}{\alpha_n}{\beta_n}\bracknorm{\xn_{1j}} \indic{\bracknorm{\xn_{1j}, \yn} \in \agammaj } \\
		& \leq \frac{1}{M_2} \sum_{j=1}^{M_2} M_1 e^{-\gamma_j}. \label{eq:108}
	\end{align}
	Define $\gamma_j \eqdef \bracknorm{1-\delta} \sum_{i=1}^n \avgI{X_i; Y_i | \overline{X}_i = x_{0j,i}} $ for an arbitrary $\delta \in \bracknorm{0,1}$.
	Expanding $\avgI{X_i; Y_i | \overline{X}_i = x_{0j,i}}$, we obtain
	\begin{align}
		& \avgI{X_i;Y_i|\overline{X}_i = x_{0j,i}} \nex
		& = \bracknorm{\sum_y \bracknorm{1-\alpha_n} \pz(y) \log \frac{\pz(y)}{\pbarcov{0}{\alpha_n}{\beta_n}(y)}  + \sum_y \alpha_n \po(y) \log \frac{\po(y)}{\pbarcov{0}{\alpha_n}{\beta_n}(y)} } \indic{x_{0j,i} = 0} \nex
		& \dupspace + \bracknorm{\sum_y \beta_n \pz(y) \log \frac{\pz(y)}{\pbarcov{1}{\alpha_n}{\beta_n}(y)}  + \sum_y \bracknorm{1-\beta_n} \po(y) \log \frac{\po(y)}{\pbarcov{1}{\alpha_n}{\beta_n}(y)} } \indic{x_{0j,i} = 1} \\
		& = \bracknorm{\alpha_n \avgD{\po}{\pz} \!-\!\avgD{\pbarcov{0}{\alpha_n}{\beta_n}}{\pz} } \!\indic{x_{0j,i} \!=\! 0} \!+\!  \bracknorm{\beta_n \avgD{\pz}{\po} \!-\! \avgD{\pbarcov{1}{\alpha_n}{\beta_n}}{\po} } \indic{x_{0j,i} \!=\! 1} \\
		& \stackrel{(a)}{=} \bracknorm{\alpha_n \avgD{\po}{\pz} + \bigO{\alpha_n^2} } \indic{x_{0j,i} = 0} +  \bracknorm{\beta_n \avgD{\pz}{\po} + \bigO{\beta_n^2} } \indic{x_{0j,i} = 1}. \label{eq:115}
	\end{align}
	where $(a)$ follows from combining~\eqref{eq:12},~\eqref{eq:13},~\eqref{eq:14}, and~\eqref{eq:15}, in the proof of Lemma~\ref{lem:covertprocess}. 
	Aggregating the $n$ mutual information terms corresponding to each symbol position, we obtain
	\begin{align}
		\sum_{i=1}^n \avgI{X_i;Y_i|\overline{X}_i = x_{0j,i}} & = n \bracknorm{\bracknorm{1 - \lambda_j}\alpha_n \avgD{\po}{\pz} + \lambda_j \beta_n \avgD{\pz}{\po} } + n \bigO{\alpha_n^2} + n \bigO{\beta_n^2}. \label{eq:117}
	\end{align}
	We bound the probability term in~\eqref{eq:103} by  
	\begin{align}
		\!\mathbb{P}_{\wyxn\pibar{\xn_{0j}}{\alpha_n}{\beta_n} }\!\!\bracknorm{\!\bracknorm{\agammaj\!}^c}\! & \stackrel{(a)}{\leq} \!\exp\bracknorm{-\zeta_1 n \alpha_n} \!+\! \exp \bracknorm{-\zeta_1 n \beta_n}, \label{eq:118}
	\end{align}
	for an appropriate $\zeta_1 > 0$, where $(a)$ follows from using Bernstein's inequality as in~\cite[Appendix D]{ArumugamBloch2018}. 
	Then, combining~\eqref{eq:101},~\eqref{eq:102},~\eqref{eq:103_1},~\eqref{eq:103},~\eqref{eq:104},~\eqref{eq:106},~\eqref{eq:108}, and~\eqref{eq:118}, we infer that~\eqref{eq:78} is satisfied for a large $n$ and appropriate constant $\xi_2 > 0$ if, for every $j \in \intseq{1}{M_2}$, $\log M_1$ satisfies
	\begin{align}
		\log M_1 < \bracknorm{1 - \delta} n \bracknorm{\bracknorm{1 - \lambda_j}\alpha_n \avgD{\po}{\pz} + \lambda_j \beta_n \avgD{\pz}{\po} } + n \bigO{\alpha_n^2} + n \bigO{\beta_n^2}. \label{eq:117_1}
	\end{align}
	However, since $\lambda_j$ is arbitrarily close to $\lambda^*$ and $\delta$ is arbitrary, it is sufficient if $\log M_1$ satisfies~\eqref{eq:77}.
\section{Proof of Lemma~\ref{lem:res}} \label{sec:resprf}
	For $W_2 = j$ and $\xn_{0j} \in \calC_2$, define the set
	\begin{align}
		\!\!\!\!\btauj \eqdef \brackcurl{\bracknorm{\xn, \zn} \in \calX^n \times \calZ^n : \log \frac{\wzxn{}\bracknorm{\zn|\xn}}{\qbarcovn{\xn_{0j}}{\alpha_n}{\beta_n}\bracknorm{\zn}} \leq \tau_j }, \label{eq:120}
	\end{align}
	where $\tau_j > 0$ will be determined later. 
	For a fixed $j$ and $i \in \intseq{1}{M_1}$, the expectation over all random codewords $\brackcurl{\bxn_{kj}}_{k \in \intseq{1}{M_1}\backslash\brackcurl{i}}$ is denoted by $\mathbb{E}_{\sim i}$.  
	We bound the KL divergence between $\smash{\qhatn{W_2}}$ and $\smash{\qbarcovn{\xn_{0W_2}}{\alpha_n}{\beta_n}}$ averaged over all choices of the common message $W_2$ and the codebook by
	\begin{align}
		& \E{\calC}{\mathbb{E}_{W_2}\avgD{\qhatn{W_2}}{\qbarcovn{\bxn_{0W_2}}{\alpha_n}{\beta_n}}} \nex
		& \dupspace = \E{\calC}{\frac{1}{M_2} \sum_{j=1}^{M_2} \sum_\zn \frac{1}{M_1} \sum_{i=1}^{M_1} \wzxn{}\bracknorm{\zn|\bxn_{ij}} \log \frac{\sum_{k=1}^{M_1} \wzxn{}\bracknorm{\zn|\bxn_{kj}}}{M_1 \qbarcovn{\bxn_{0j}}{\alpha_n}{\beta_n}\bracknorm{\zn}} } \\
		& \dupspace \!=\! \frac{1}{M_1M_2} \!\sum_{j=1}^{M_2} \!\sum_{i=1}^{M_1}\! \sum_{\xn_{0j}}\! P_X^n \bracknorm{\xn_{0j}} \!\sum_\zn \!\sum_{\xn_{ij}} \!\wzxn{}\!\bracknorm{\zn|\xn_{ij}} \!\pibar{\xn_{0j}}{\alpha_n}{\beta_n}\!\! \bracknorm{\xn_{ij}} \nex
		& \dupspace \dupspace \times \!\E{\sim i}{ \!\log\!\! \bracknorm{\! \frac{\sum_{\substack{k=1\\k\neq i}}^{M_1}\! \wzxn{}\!\bracknorm{\zn|\bxn_{kj}}}{M_1 \qbarcovn{\xn_{0j}}{\alpha_n}{\beta_n}\bracknorm{\zn}} \!+\! \frac{\wzxn{}\bracknorm{\zn|\xn_{ij}}}{M_1 \!\qbarcovn{\xn_{0j}}{\alpha_n}{\beta_n}\!\bracknorm{\zn}} \!}\!\! }\\
		& \dupspace \! \stackrel{(a)}{\leq} \! \frac{1}{M_1M_2} \!\sum_{j=1}^{M_2} \!\sum_{i=1}^{M_1}\! \sum_{\xn_{0j}}\! P_X^n \bracknorm{\xn_{0j}} \!\sum_\zn \!\sum_{\xn_{ij}} \!\wzxn{}\!\bracknorm{\zn|\xn_{ij}} \!\pibar{\xn_{0j}}{\alpha_n}{\beta_n}\!\! \bracknorm{\xn_{ij}} \nex
		& \dupspace \dupspace \times \log \E{\sim i}{ \! \frac{\sum_{\substack{k=1\\k\neq i}}^{M_1}\! \wzxn{}\!\bracknorm{\zn|\bxn_{kj}}}{M_1 \qbarcovn{\xn_{0j}}{\alpha_n}{\beta_n}\bracknorm{\zn}} \!+\! \frac{\wzxn{}\bracknorm{\zn|\xn_{ij}}}{M_1 \!\qbarcovn{\xn_{0j}}{\alpha_n}{\beta_n}\!\bracknorm{\zn}} \!} \label{eq:121}, 
	\end{align}
where $(a)$ follows from Jensen's inequality.	
Defining $\smash{\mumin} \eqdef \min\brackcurl{\bracknorm{1 - \alpha_n}\min_z Q_0(z), \bracknorm{1 - \beta_n} \min_z Q_1(z)}$, we bound the log term in~\eqref{eq:121} by
	\begin{align}
		& \!\!\log \mathbb{E}_{\sim i}\!\bracknorm{\! \frac{\sum_{\substack{k=1\\k\neq i}}^{M_1}\! \wzxn{}\!\bracknorm{\zn|\bxn_{kj}}}{M_1 \qbarcovn{\xn_{0j}}{\alpha_n}{\beta_n}\bracknorm{\zn}} \!+\! \frac{\wzxn{}\bracknorm{\zn|\xn_{ij}}}{M_1 \qbarcovn{\xn_{0j}}{\alpha_n}{\beta_n}\!\!\bracknorm{\zn}} \!} \nex
		& \dupspace \!\!=\! \log \!\bracknorm{\! \frac{\sum_{\substack{k=1\\k\neq i}}^{M_1}\! \sum_{\xn_{kj}}\!\! \wzxn{}\!\bracknorm{\zn|\xn_{kj}} \pibar{\xn_{0j}}{\alpha_n}{\beta_n}\!\bracknorm{\xn_{kj}}\! }{M_1 \qbarcovn{\xn_{0j}}{\alpha_n}{\beta_n}\!\!\bracknorm{\zn}\!} \!+\! \frac{\wzxn{}\bracknorm{\zn|\xn_{ij}}}{M_1 \qbarcovn{\xn_{0j}}{\alpha_n}{\beta_n}\bracknorm{\zn}} \!} \\
		& \dupspace \stackrel{(a)}{\leq} \log \bracknorm{1 + \frac{\wzxn{}\bracknorm{\zn|\xn_{ij}}}{M_1 \qbarcovn{\xn_{0j}}{\alpha_n}{\beta_n}\bracknorm{\zn}}} \\ 
		& \dupspace \leq \log \bracknorm{1 + \frac{e^{\tau_j}}{M_1}} + \log \bracknorm{1 + \frac{1}{\qbarcovn{\xn_{0j}}{\alpha_n}{\beta_n} \bracknorm{\zn}}} \indic{\bracknorm{\xn_{ij}, \zn} \not \in \btauj} \\
		& \dupspace \leq \frac{e^{\tau_j}}{M_1} + n \log \bracknorm{\frac{2}{\mumin}} \indic{\bracknorm{\xn_{ij}, \zn} \not \in \btauj}\label{eq:122}
	\end{align}
	where $(a)$ follows from the fact that $\sum_{\xn_{kj}} \wzxn{}\bracknorm{\zn|\xn_{kj}} \pibar{\xn_{0j}}{\alpha_n}{\beta_n} \bracknorm{\xn_{kj}} = \qbarcovn{\xn_{0j}}{\alpha_n}{\beta_n}\bracknorm{\zn}$. 
	Combining~\eqref{eq:121} and~\eqref{eq:122}, we obtain
	\begin{align}
		\E{\calC}{\mathbb{E}_{W_2}\avgD{\qhatn{W_2}}{\qbarcovn{\bxn_{0W_2}}{\alpha_n}{\beta_n}}}  & \leq n \log \bracknorm{\frac{2}{\mumin}} \frac{1}{M_2} \sum_{j=1}^{M_2} \sum_{\xn_{0j}}\! P_X^n \bracknorm{\xn_{0j}} \P[\wzxn\pibar{\xn_{0j}}{\alpha_n}{\beta_n} ]{\bracknorm{\btauj}^c} \nex
		& \dupspace + \frac{1}{M_2} \sum_{j=1}^{M_2} \frac{e^{\tau_j}}{M_1} . \label{eq:125}
	\end{align}
Using steps similar to those used to obtain~\eqref{eq:117} in Appendix~\ref{sec:relprf}, we obtain
	\begin{align}
		\sum_{i=1}^n \mathbb{I}\bracknorm{X_i;Z_i|\overline{X}_i = x_{0j,i}} & = n \bracknorm{\bracknorm{1 - \lambda_j}\alpha_n \avgD{\qo}{\qz} + \lambda_j \beta_n \avgD{\qz}{\qo} } + n \bigO{\alpha_n^2} + n \bigO{\beta_n^2} \label{eq:126}
	\end{align} 
	Defining $\tau_j \eqdef \bracknorm{1+\delta} \sum_{i=1}^n \avgI{X_i;Z_i|\overline{X}_i = x_{0j,i}} $ for an arbitrary $\delta> 0$, we bound the probability term on the right hand side of~\eqref{eq:125} using Bernstein's inequality by
	\begin{align}
		\!\P[\wzxn\pibar{\xn_{0j}}{\alpha_n}{\beta_n} ]{\bracknorm{\btauj}^c} \!\leq \!\exp\!\bracknorm{-\zeta_2 n \alpha_n} \!+\! \exp \!\bracknorm{-\zeta_2 n \beta_n}\!,\label{eq:127}
	\end{align}
	for an appropriate $\zeta_2>0$. 
	Consequently, combining~\eqref{eq:125} and~\eqref{eq:127} and ensuring\footnote{Similar to Appendix~\ref{sec:relprf}, we remove the dependency of $\log M_1$ on $j$ via $\lambda_j$ in $\tau_j$ using the fact that $\lambda_j$ is arbitrarily close to $\lambda^*$ for every $j \in \intseq{1}{M_2}$.} $\log M_1$ satisfies~\eqref{eq:81} for an arbitrary $\nu \in \bracknorm{0,1}$ and a large $n$,
	we conclude that there exists a constant $ \xi_3 > 0$ such that~\eqref{eq:82} is satisfied. 

\bibliographystyle{IEEEtranS}

\end{document}